\newtheorem{proposition}{Proposition}
\newtheorem{proposition?}{Proposition?}
\newtheorem{theorem}{Theorem}
\newtheorem{lemma}{Lemma}
\newtheorem{corollary}{Corollary}
\theoremstyle{definition}
\newtheorem{example}{Example}
\newtheorem{definition}{Definition}
\newcommand{\mc}[1]{\mathcal{#1}}
\newcommand{\msf}[1]{\mathsf{#1}}
\newcommand{\mr}[1]{\mathrm{#1}}
\newcommand{\mfr}[1]{\mathfrak{#1}}
\newcommand{\N}{\mathbb N}
\newcommand{\C}{\mathbb C}
\newcommand{\half}{\tfrac{1}{2}} 
\newcommand{\mo}[1]{\left| #1 \right|} 
\newcommand{\hil}{\mathcal{H}} 
\newcommand{\hi}{\mathcal{H}} 
\newcommand{\hik}{\mathcal{K}} 
\newcommand{\trh}{\mathcal{T(H)}} 
\newcommand{\ip}[2]{\left\langle\,#1\,|\,#2\,\right\rangle} 
\newcommand{\kb}[2]{|#1\rangle\langle#2|} 
\newcommand{\tr}[1]{\mathrm{tr}\left[#1\right]} 
\newcommand{\id}{\mathbbm{1}} 
\newcommand{\Ao}{\mathsf{A}}
\newcommand{\Bo}{\mathsf{B}}
\newcommand{\Po}{\mathsf{P}}
\newcommand{\Ii}{\mathcal{I}} 
\newcommand{\Li}{\mathcal{L}} 
\def\<{\langle}
\def\>{\rangle}
\newcommand{\fii}{\varphi}
  \def\mathcomposite{%
     \@ifstar
        {\def\@mathcomposite@option{%
            \baselineskip\z@skip\lineskiplimit-\maxdimen}%
         \@mathcomposite}%
        {\let\@mathcomposite@option\offinterlineskip
         \@mathcomposite}}
  \def\@mathcomposite{%
     \@ifnextchar[\@@mathcomposite{\@@mathcomposite[0]}}
  \def\@@mathcomposite[#1]#2#3#4{%
     #2{\mathchoice
        {\@mathcomposite@{#1}{#3}{#4}\displaystyle{1}}%
        {\@mathcomposite@{#1}{#3}{#4}\textstyle{1}}%
        {\@mathcomposite@{#1}{#3}{#4}%
         \scriptstyle\defaultscriptratio}%
        {\@mathcomposite@{#1}{#3}{#4}%
         \scriptscriptstyle\defaultscriptscriptratio}}}
  \def\@mathcomposite@#1#2#3#4#5{%
     \vcenter{\m@th\@mathcomposite@option
        \dimen@\f@size\p@\dimen@#1\dimen@\dimen@#5\dimen@
        \divide\dimen@ 18
        \edef\@mathcomposite@skipamount{\the\dimen@}%
        \ialign{\hfil$#4##$\hfil\cr
           #2\crcr
           \noalign{\vskip\@mathcomposite@skipamount}%
           #3\crcr}}}
\newcommand{\psleq}{\mathcomposite{\mathrel}{\prec}{\sim}}
\begin{document}

\title[]{The unavoidable information flow to environment in quantum measurements}

\author[]{Erkka Haapasalo$^\diamondsuit$}
\address{$\diamondsuit$Erkka Haapasalo, Department of Nuclear Engineering, Kyoto University, 6158540 Kyoto, Japan}
\email{haapasalo.erkka.45e@st.kyoto-u.ac.jp}

\author[]{Teiko Heinosaari$^\clubsuit$}
\address{$\clubsuit$Teiko Heinosaari, Turku Centre for Quantum Physics, Department of Physics and Astronomy, University of Turku, Finland}
\email{teiko.heinosaari@utu.fi}

\author[]{Takayuki Miyadera$^\spadesuit$}
\address{$\spadesuit$Takayuki Miyadera, Department of Nuclear Engineering, Kyoto University, 6158540 Kyoto, Japan}
\email{miyadera@nucleng.kyoto-u.ac.jp}

\begin{abstract}
One of the basic lessons of quantum theory is that one cannot obtain information on an unknown quantum state without disturbing it. Hence, by performing a certain measurement, we limit the other possible measurements that can be effectively implemented on the original input state. It has been recently shown that one can implement sequentially any device, either channel or observable, which is compatible with the first measurement \cite{HeMi15}. In this work we prove that this can be done, apart from some special cases, only when the succeeding device is implemented on a larger system than just the input system. This means that some part of the still available quantum information has been flown to the environment and cannot be gathered by accessing the input system only.
We characterize the size of the post-measurement system by determining the class of measurements for the observable in question that allow the subsequent realization of any measurement process compatible with the said observable. We also study the class of measurements that allow the subsequent realization of any observable jointly measurable with the first one and show that these two classes coincide when the first observable is extreme.
\end{abstract}

\maketitle

\section{Introduction}
 
In a measurement of a quantum observable $\msf A$, the obtained result can be divided into two forms: into the classical information $p_\varrho^\Ao$, where $p_\varrho^\Ao(j)$ is the probability of detecting the value $j$ of the observable in the input state $\varrho$, and into the quantum information remaining in the post-measurement system after the measurement. In this work we are interested in this remaining quantum information and we ask the question: \emph{how can we measure $\msf A$ to enable as many different subsequent measurements or other quantum information processing protocols as possible?}

A partial answer has been given earlier by two of the authors of the present paper. As shown in \cite{HeMi15}, there is a way to measure $\msf A$ in such a way that it allows the subsequent realization of any device $\mathcal{D}$ compatible with $\msf A$ in a sequential set-up. This means that there is a device $\mathcal{D}'$ such that, carrying out $\mathcal{D}'$ after the specific least disturbing measurement of $\msf A$, one has an implementation of $\mathcal{D}$. Here $\mathcal{D}$ can be any other observable or a quantum channel compatible with $\msf A$.  All this means that the unconditional state transformation associated with this least disturbing measurement of $\Ao$, namely the least disturbing channel $\Lambda_{\msf A}$ associated with $\msf A$, preserves all the quantum information in the post-measurement system required to perform any measurement processes compatible with $\msf A$ after the least disturbing measurement.

All the channels that are concatenation equivalent with the least disturbing channel have this same information preservation property. In this paper we give a complete characterization of this equivalence class: it  consists exactly all the least disturbing channels $\Lambda_{\msf B}$ associated with observables $\msf B$ that are post-processing equivalent with $\msf A$. An important tool in our proofs is the recent definition of minimally sufficient observables \cite{Kuramochi15b}. This notion enables us to further pinpoint a particular representative from the class of least disturbing channel class; the one with the lowest output dimension, thus being the least wasteful least disturbing channel associated with a given observable. In this way, we get a characterization of how large environment one has to minimally take into account if one does not want to loose any information that can be possibly used after the measurement of $\Ao$. Since in most cases the auxiliary device $\mathcal{D}'$ must operate in a larger system than $\mathcal{D}$, one can conclude that the information flow to environment is unavoidable.

We also study a strictly smaller subclass of sequential schemes described above: sequential measurements. Now we only ask how to measure an observable $\msf A$ so that any joint measurement of $\msf A$ with some other observable $\msf B$ can be carried out by first measuring $\msf A$ in this special way followed by measuring some possibly distorted version $\msf B'$ of $\msf B$. The unconditional state transformation of this special measurement of $\msf A$ should leave enough information in the post-measurement system to enable the subsequent realizations of measurements of all observables jointly measurable with $\msf A$. Note that a priori we may not need the full power of the least disturbing channel $\Lambda_{\msf A}$ for this task since we are now looking only at subsequent realizations of observables, not those of all the more general quantum information processing tasks described by channels. However, we prove that, at least when $\msf A$ is an extreme observable, these tasks coincide; we need the whole information preservation power of $\Lambda_\Ao$ for the sequential measurement scenario.

\begin{center}
\begin{figure}
\includegraphics[scale=0.17]{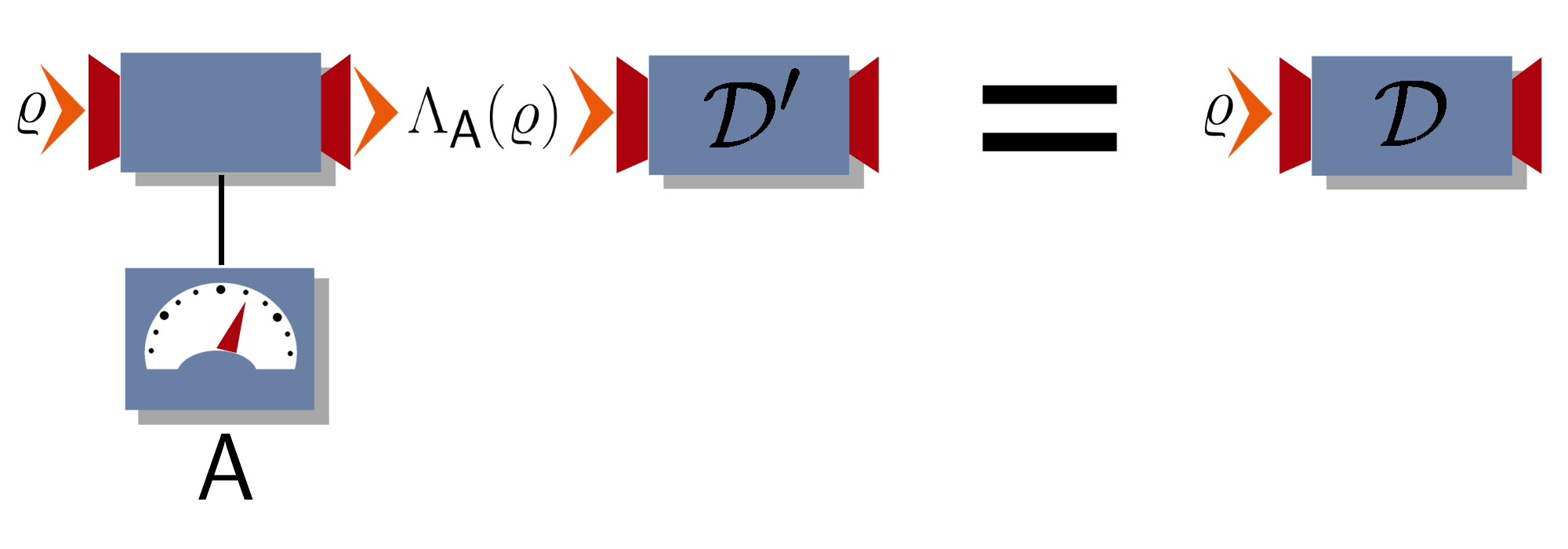}
\caption{For any observable $\msf A$, there is a way to measure $\msf A$ such that any device $\mc D$ compatible with $\msf A$ can be realized by some device $\mc D'$ after the said measurement of $\msf A$. When the unconditioned state transformation induced by such a universal measurement is denoted by $\Lambda_\Ao$, this means that $\mc D=\mc D'\circ\Lambda_{\msf A}$.\label{fig:output1}}
\end{figure}
\end{center}

\begin{center}
\begin{figure}
\includegraphics[scale=0.17]{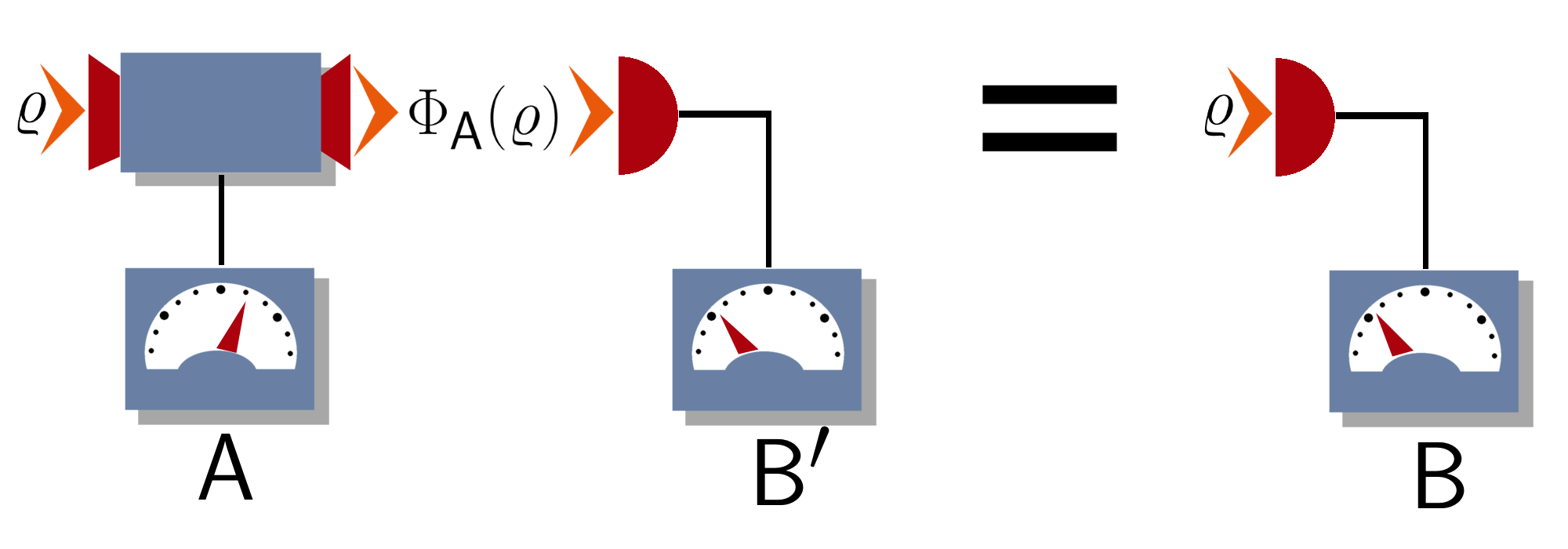}
\caption{We may measure any observable $\msf B$ compatible (jointly measurable) with a fixed observable $\msf A$ always in a sequential setting: We first measure $\msf A$ in a particular way such that the unconditioned state transformation induced by this measurement is $\Phi_\Ao$ and any observable $\msf B$ compatible with $\msf A$ can be realized by measuring some observable $\msf B'$ after the particular measurement of $\msf A$. It is clear that a measurement of $\msf A$ inducing the channel $\Lambda_{\msf A}$ of Fig. \ref{fig:output1} can do this, but is $\Lambda_{\msf A}$ actually necessary? \label{fig:output2}}
\end{figure}
\end{center}

\section{Sequential quantum measurements}

Before formulating and proving the results outlined above, we recall some basics of quantum measurements by defining the notions of observables, channels, and instruments. An important notion throughout this work is compatibility of an observable with channels and with other observables. We also discuss the link between sequential measurements and compatibility and give two (a priori) different definitions of universality of a channel with respect to an observable. These universality properties are at the core of our subsequent discussion.

\subsection{Quantum observables, channels and instruments}

As usual, we identify the physical observables of a quantum system with positive-operator-valued measures (POVMs). We shall concentrate on discrete observables. In this view, an observable on a system described by a Hilbert space $\hil$ and with outcomes in $\Omega_{\msf A}\subset\N$ is thus a positive-operator-valued map $\msf A:\Omega_{\msf A}\to\mc L(\hil)$ such that $\sum_{j\in\Omega_{\msf A}}\msf A(j)=\id_\hil$. A POVM $\msf P:\Omega_{\msf P}\to\mc L(\hil)$, $\Omega_{\msf P}\subset\N$, is a projection-valued measure (PVM) if $\msf P(j)$ is a projection for each $j\in\Omega_{\msf P}$. The physical observables corresponding to PVMs are called as {\it sharp observables}. We have introduced the notations $\mc L(\hil)$ for the set of bounded operators on $\hil$ and $\id_\hil$ for the identity operator on $\hil$. Moreover, let us denote the set of states on $\hil$, i.e.,\ positive operators on $\hil$ of trace 1, by $\mc S(\hil)$. For any state $\varrho\in\mc S(\hil)$, the observable $\msf A$ determines a probability distribution $p_\varrho^{\msf A}:\Omega_{\msf A}\to[0,1]$ through
\begin{equation}
p_\varrho^{\msf A}(j):=\tr{\varrho\msf A(j)} \, ,\qquad j\in\Omega_{\msf A} \, .
\end{equation}
The number $p_\varrho^{\msf A}(j)$ is the probability of detecting the outcome $j$ in any measurement of $\msf A$ when the system was initially in the state $\varrho$.

Let $\hil$ and $\mc K$ be Hilbert spaces and denote by $\trh$ (resp. $\mc T(\mc K)$) the trace class on $\hil$ (resp. $\mc K$). A linear map $\Lambda:\mc S(\hil)\to\mc S(\mc K)$ is called an {\it operation} when its dual $\Lambda^*:\mc L(\mc K)\to\mc L(\hil)$ defined through 
\begin{equation}
\tr{\varrho\Lambda^*(B)}=\tr{\Lambda(\varrho)B} \, , \quad \varrho\in\mc S(\hil),  B\in\mc L(\mc K) \, , 
\end{equation}
is completely positive and $\Lambda^*(\id_{\mc K})\leq\id_\hil$. The latter condition is equivalent with $\tr{\Lambda(\varrho)}\leq1$ for all $\varrho\in\mc S(\hil)$. A {\it channel} is an operation $\Lambda$ with $\Lambda^*(\id_{\mc K})=\id_\hil$ or, equivalently, $\Lambda$ maps states into states. Hence, channels describe state transformations that can be parts of the time evolution of the system or, in the case we are now particularly interested in, induced by a measurement of a quantum system in which the system before the measurement is described by the Hilbert space $\hil$ and the post-measurement system is associated with $\mc K$. Naturally, a channel $\Lambda:\mc T(\hil)\to\mc T(\mc K)$ can be identified with its restriction $\Lambda|_{\mc S(\hil)}:\mc S(\hil)\to\mc S(\mc K)$ which is an affine map.

A detailed mathematical description of a quantum measurement contains both an observable and a channel as its parts. A suitable concept is that of an instrument \cite{QTOS76,DaLe70}.

\begin{definition}
An observable $\Ao:\Omega_\Ao\to\mc L(\hil)$ and a channel $\Lambda:\mc S(\hil)\to\mc S(\mc K)$, where $\mc K$ is any Hilbert space, are \emph{compatible} if there exists an map (an \emph{instrument}) $\mc I:\Omega_{\Ao}\times\mc T(\hil)\to\mc T(\mc K)$ such that, for any $j\in\Omega_{\Ao}$, the map $\mc I(j,\cdot)$ is an operation, $\tr{\mc I(j,\varrho)}=p^\Ao_\varrho(j)$ for all $\varrho\in\mc S(\hil)$, and $\sum_{j\in\Omega_{\Ao}}\mc I(j,\cdot)=\Lambda$. If $\Ao$ and channel $\Lambda$ are not compatible, then they are called \emph{incompatible}. We denote the class of all channels $\Lambda:\mc S(\hil)\to\mc S(\mc K)$ (with varying Hilbert space $\mc K$) compatible with an observable $\Ao:\Omega_\Ao\to\mc L(\hil)$ by $\mfr C_\Ao$.
\end{definition}

The most paradigmatic example of incompatibility is the no-information-without-disturbance theorem. It says that a unitary channel is incompatible with any nontrivial observable, i.e., any observable which is not of the form $\Ao(j)=p_j \id$ for some probability distribution $(p_j)$; see e.g. \cite[Sec. 5.2.2]{MLQT12}. To give an example of compatibility, we recall that the L\"uders instrument $\Li^\Ao$ of $\Ao$ is defined as $\Li^\Ao(j,B)=\sqrt{\Ao(j)} B \sqrt{\Ao(j)}$. Further demonstrations of compatibility between channels and observables can be found in \cite{HeReRyZi17}.

\subsection{Sequential implementation of compatible quantum devices}

According to quantum theory, measuring two observables simultaneously is not always possible. However, when this is possible, we say that the observables are \emph{compatible} or \emph{jointly measurable}. 
Next we give a formal definition of this notion \cite{LaPu97}.

\begin{definition}
We say that observables $\msf A:\Omega_{\msf A}\to\mc L(\hil)$ and $\msf B:\Omega_{\msf B}\to\mc L(\hil)$ are {\it compatible} if there is an observable $\msf G:\Omega_{\msf A} \times \Omega_{\msf B}\to\mc L(\hil)$ such that
\begin{equation}
\begin{array}{rcll}
\msf A(j)&=&\sum_{k\in\Omega_{\msf B}}\msf G(j,k) \, , &j\in\Omega_{\msf A}\\
\msf B(k)&=&\sum_{j\in\Omega_{\msf A}}\msf G(j,k) \, , &k\in\Omega_{\msf B} \, .
\end{array}
\end{equation}
Such an observable $\msf{G}$ is called as a {\it joint observable for $\msf A$ and $\msf B$} and $\msf A$ and $\msf B$ are called as the {\it margins of $\msf G$}. We denote the class of all observables $\msf B:\Omega_{\msf B}\to\mc L(\hil)$ (with a varying value space) compatible with an observable $\msf A:\Omega_{\msf A}\to\mc L(\hil)$ by $\mfr O_\Ao$.
\end{definition}

One possibility to implement a joint observable is a sequential measurement. Suppose that we measure an observable $\msf A:\Omega_{\msf A}\to\mc L(\hil)$, i.e.,\ carry out an {\it $\msf A$-instrument} $\mc I:\Omega_{\msf A}\times\mc T(\hil)\to\mc T(\mc K)$ with some output Hilbert space $\mc K$ such that,  $\tr{\mc I(\cdot,\varrho)}=p^{\msf A}_\varrho$ for all $\varrho\in\mc S(\hil)$. The conditional state after the measurement conditioned by detecting the value $i\in\Omega_{\msf A}$ when the pre-measurement state was $\varrho$ is $\mc I(i,\varrho)$. When we now measure a second observable $\msf C:\Omega_{\msf C}\to\mc L(\mc K)$, we obtain a measurement of an observable $\msf G:\Omega_{\msf A}\times\Omega_{\msf C}\to\mc L(\hil)$ with measurement outcome statistics $p_\varrho^{\msf G}(i,j)=\tr{\mc I(i,\varrho)\msf C(j)}$, i.e.,
\begin{equation}
\msf G(i,j)=[\mc I(i,\cdot)^*]\big(\msf C(j)\big),\qquad i\in\Omega_{\msf A},\quad j\in\Omega_{\msf C},
\end{equation}
with the margins $\msf A=\sum_j\msf G(\cdot,j)$ and $\msf B:=\sum_i\msf G(i,\cdot)=\Lambda^*\circ\msf C$. Hence, a sequential measurement of $\msf A$ and $\msf C$ (in this order) gives rise to a joint measurement of $\msf A$ and a disturbed version $\msf B=\Lambda\circ\msf C$ of $\msf C$ affected by the total channel $\Lambda$ induced by the measurement of $\msf A$.

A question arises, whether any joint measurements of two compatible observables can be realized as a sequential measurement as described above. The answer is quite easily seen to be `yes' \cite{HeWo10}, even in the case of all physically meaningful continuous observables (those with a countably generated value space $\sigma$-algebra) \cite{HaPe17}. However, one can also ask whether, for an observable $\msf A$, there is a measurement of $\msf A$, i.e.,\ an $\msf A$-instrument $\mc I$ such that any observable $\msf B$ compatible with $\msf A$ can be measured jointly with $\msf A$ in a sequential setting by measuring $\msf A$ first with the measurement setting $\mc I$ and then measuring some observable $\msf C$ depending on $\msf B$. As this property depends only on the channel defined by $\Ii$ and not other details of $\Ii$, we formalize it as a property of channels compatible with $\msf A$.

\begin{definition}\label{def:obsuniversal}
Let $\msf A:\Omega_{\msf A}\to\mc L(\hil)$ be an observable. A channel $\Phi:\mc S(\hil)\to\mc S(\hik_\Phi)$ is {\it $\mfr O_\Ao$-universal for $\msf A$} if it is compatible with $\msf A$ and, whenever $\msf B:\Omega_{\msf B}\to\mc L(\hil)$ is an observable compatible with $\msf A$, there is an observable ${\msf B}':\Omega_{\msf B}\to\mc L(\hik_\Phi)$ such that $\msf B=\Phi^*\circ {\msf B}'$.
\end{definition}

Instead of performing a joint measurement of $\Ao$ and some other observable $\Bo$, we may want to apply some channel $\Lambda$ after $\Ao$. Again, we can use some other channel $\Lambda'$ than $\Lambda$, but we want the effective channel to be $\Lambda$.

\begin{definition}\label{def:chuniversal}
Let $\msf A:\Omega_{\msf A}\to\mc L(\hil)$ be an observable. We say that a channel $\Phi:\mc S(\hil)\to\mc S(\hik_\Phi)$ is {\it $\mfr C_\Ao$-universal for $\msf A$} if it is compatible with $\msf A$ and, whenever  $\Lambda$ is a channel compatible with $\Ao$, there is a channel $\Lambda'$ such that $\Lambda=\Lambda'\circ\Phi$.
\end{definition}

According to \cite{HeMi15}, for any observable $\msf A:\Omega_{\msf A}\to\mc L(\hil)$, there is a channel which is $\mfr C_\Ao$-universal for $\msf A$, the least disturbing channel associated with $\msf A$ which shall be described and discussed in depth in Section \ref{sec:leastdisturbing}. Moreover, for this universality one needs the full information preservation power of the least disturbing channel. To clarify the properties of the least disturbing channel, in Section \ref{subsec:char}, we characterize the channels concatenation equivalent with the least disturbing channel.

Since joint measurements with $\msf A$ are a proper subclass of all the sequential processes where $\msf A$ is measured first, the least disturbing channel associated with $\msf A$ is also $\mfr O_\Ao$-universal for $\msf A$. The natural question arises, whether some strictly less information-preserving channel suffices for this seemingly less stringent universality property. In Section  \ref{subsec:extuniversal}, we show that, in the case where $\msf A$ is an extreme observable, the answer to this question is `no'.

To formalize what we mean by information preservation and information yield of a measurement, we need to discuss post-processing relations within the classes of observables and channels.

\subsection{Post-processing of observables and channels}

Measurements of some observables yield more information than those of others. Consider two observables $\Ao:\Omega_{\Ao}\to\mc L(\hil)$ and $\Bo:\Omega_{\Bo}\to\mc L(\hil)$, $\Omega_{\Ao},\,\Omega_{\Bo}\subset\N$. The observable $\Ao$ can be seen as more informative than $\Bo$ if the outcome statistics of $\Bo$ can be classically processed from that of $\Ao$ in a fixed way that does not depend on the initial state of the system being measured. This is formalized in the following definition \cite{MaMu90a}.

\begin{definition}\label{def:postpr}
For the observables $\Ao$ and $\Bo$ introduced above, we denote $\Bo\psleq\Ao$, if there is a stochastic matrix (or Markov matrix) $p=\big(p(k|j)\big)_{k\in\Omega_{\Bo},j\in\Omega_{\Ao}}$, i.e.,\ all the matrix entries are non-negative and $\sum_{k\in\Omega_{\Bo}}p(k|j)=1$ for all $j\in\Omega_{\Ao}$, such that
\begin{equation}
\Bo(k)=\sum_{j\in\Omega_{\Ao}}p(k|j)\Ao(j),\qquad k\in\Omega_{\Bo},
\end{equation}
and say that $\Bo$ is a {\it post-processing} of $\Ao$ and denote $\Bo=\Ao^p$. If also $\Ao\psleq\Bo$, we denote $\Ao\simeq\Bo$ and say that $\Ao$ and $\Bo$ are {\it post-processing equivalent}. We denote the post-processing equivalence class of discrete observables post-processing equivalent with $\Ao$ as $[\Ao]:=\{\Bo\,|\,\Bo\simeq\Ao\}$.
\end{definition}

Post-processing can be defined for channels in a completely analogously to the way it was defined for observables.

\begin{definition}
Let $\hil$, $\mc K_0$, and $\mc K$ be separable Hilbert spaces and $\Lambda_0:\mc S(\hil)\to\mc S(\mc K_0)$ and $\Lambda:\mc S(\hil)\to\mc S(\mc K)$ be channels. We denote $\Lambda_0\psleq\Lambda$ and say that $\Lambda_0$ is a {\it post-processing of $\Lambda$} if there is a channel $\Gamma:\mc S(\mc K)\to\mc S(\mc K_0)$ such that $\Lambda_0=\Gamma\circ\Lambda$. Post-processing equivalence is defined in the obvious way and the equivalence class is denoted $[\Lambda_0]:=\{\Lambda\,|\,\Lambda\simeq\Lambda_0\}$.
\end{definition}

Note that the Schr\"odinger output Hilbert space of $\Lambda\in[\Lambda_0]$ can be any separable Hilbert space. When all the Hilbert spaces considered are separable and isomorphic Hilbert spaces are identified and we only concentrate on discrete observables, all the classes $\mfr O_\Ao$, $\mfr C_\Ao$, $[\Ao]$, and $[\Lambda]$, for an observable $\Ao$ and a channel $\Lambda$, are sets. However, if we do not make this simplification, the subsequent equations involving these classes should be understood as equivalences of classes. This should cause no confusion.

\section{Least disturbing channels}\label{sec:leastdisturbing}

In \cite{HeMi13,HeMi15}, a special maximal compatible channel for each discrete observable $\msf A$ was defined. The original definition of this channel depended on a particular Na\u{\i}mark dilation of $\msf A$. In the following subsection, we give this special definition of this least disturbing channel. In Subsec. \ref{subsec:char}, however, we characterize the full post-processing equivalence class of such a least disturbing channel and show that we do not have to refer to any particular dilation of $\msf A$ in the definition of this equivalence class; indeed, this class only depends on the post-processing equivalence class of $\msf A$.

\subsection{The dilation-dependent form of a least disturbing channel}

In order to introduce the idea of a least disturbing channel for an observable, we need the notion of a Na\u{\i}mark dilation.

\begin{definition}
Let $\msf A:\Omega_{\msf A}\to\mc L(\hil)$ be an observable. We say that a triple $(\mc M,\msf P,J)$ consisting of a Hilbert space $\mc M$, a projection-valued measure $\msf P:\Omega_{\msf A}\to\mc L(\mc M)$, and an isometry $J:\hil\to\mc M$ is a {\it Na\u{\i}mark dilation for $\msf A$} if $\msf A(j)=J^*\msf P(j)J$ for all $j\in\Omega_{\msf A}$. We say that the dilation $(\mc M,\msf P,J)$ is {\it minimal} if the closure of the vector space spanned by $\msf P(j)J\fii$, $j\in\Omega_{\msf A}$, $\fii\in\hil$, is the whole of $\mc M$.
\end{definition}

Every observable has a Na\u{\i}mark dilation and, among these, there is a minimal one \cite{Naimark43}. Suppose that $(\mc M,\msf P,J)$ is a minimal Na\u{\i}mark dilation for an observable $\msf A$ and $(\mc M',\msf P',J')$ is another not necessarily minimal dilation for the same observable. It follows that there is an isometry $W:\mc M\to\mc M'$ such that $W\msf P(j)=\msf P'(j)W$ for all $j\in\Omega_{\msf A}$. Especially, the minimal Na\u{\i}mark dilations for the same observable are mutually unitarily equivalent.

\begin{example}\label{ex:naimark}
In a finite dimensional case we can write a concrete form of a minimal Na\u{\i}mark dilation as follows. We fix a spectral decomposition for each $\Ao(j)$,
\begin{equation}
\Ao(j)=\sum_{k=1}^{r_j}\kb{d_{j,k}}{d_{j,k}} \, , \qquad j\in\Omega_{\Ao} \, .
\end{equation}
We then choose $\mc{M}=\C^{r_1} \oplus \cdots \oplus \C^{r_n}$ and fix an orthonormal basis $\{e_{j,k}\}_{k=1}^{r_j}$ for each $\C^{r_j}$. We define a linear map $J:\hil\to\mc{M}$ as $J\psi = \sum_{j,k} \ip{d_{j,k}}{\psi}e_{j,k}$. Its adjoint $J^*:\mc{M}\to\hil$ is given as $J^*e_{j,k} = d_{j,k}$. The sharp observable that dilates $\Ao$ is given as 
\begin{equation}
\Po(j) = \sum_{k=1}^{r_j}\kb{e_{j,k}}{e_{j,k}} \, , \qquad j\in\Omega_{\Ao} \, .
\end{equation}
The vectors $\Po(j)Jd_{j,k}=e_{j,k}$, $k=1\ldots,\,r_j$, $j\in\Omega_{\Ao}$ span $\mc M$ showing that $(\mc M,\Po,J)$ is a minimal dilation for $\msf A$.
\end{example}

Let $(\mc M,\msf P,J)$ be a minimal Na\u{\i}mark dilation for $\Ao$. The least disturbing channel $\Lambda_\Ao$ of $\Ao$ (associated to the dilation $(\mc M,\msf P,J)$) is defined as
\begin{equation}\label{eq:leastdistr0}
\Lambda_\Ao(\varrho) = \sum_j\Po(j)J\varrho J^*\Po(j) \, , \quad\varrho\in\mc S(\hil).
\end{equation}
The name `least disturbing channel' is justified by the fact that, as shown in \cite{HeMi13}, any channel $\Lambda:\mc S(\hil)\to\mc S(\mc K)$ that is compatible with $\Ao$ is of the form
\begin{equation}
\Lambda = \Lambda'\circ\Lambda_\Ao\, ,
\end{equation}
where $\Lambda':\mc S(\mc M)\to\mc S(\mc K)$ is some channel. Physically this means that $\Lambda$ can be implemented by concatenating $\Lambda'$ and $\Lambda_\Ao$. Hence, whenever one needs to measure $\msf A$ but still wants to have the opportunity to {\it afterwards} carry out some quantum device compatible with $\msf A$ (typically a channel), one can do the following: First measure $\msf A$ such that the associated total state transformation is $\Lambda_{\msf A}$. For any quantum device $\mathcal{D}$ (channel or observable) compatible with $\msf A$, there exists a device $\mathcal{D}'$ on the output system of $\Lambda_{\msf A}$ described by the Hilbert space $\mc M$ such that carrying out $\mathcal{D}'$ after the least disturbing measurement of $\msf A$ gives an implementation of $\mathcal{D}$, i.e.,\ $\mathcal{D}=\mathcal{D}' \circ \Lambda_{\msf A}$.

Let us introduce another (not necessarily minimal) Na\u{\i}mark dilation $(\mc M',\msf P',J')$ for $\Ao$ and define the channel $\Lambda'_{\Ao}:\mc L(\mc M')\to\mc L(\hil)$,
\begin{equation}\label{eq:leastdistr}
\Lambda'_{\msf A}(\varrho)=\sum_{j\in\Omega_{\msf A}}\msf P'(j)J'\varrho(J')^*\msf P'(j),\quad \varrho\in\mc S(\mc \hil).
\end{equation}
We have an isometry $W:\mc M\to\mc M'$ such that $W\msf P(j)=\msf P'(j)W$ for all $j\in\Omega_{\msf A}$ and $WJ=J'$. Define the channels $\Phi:\mc S(\mc M)\to\mc S(\mc M')$ and $\iota:\mc S(\mc M')\to\mc S(\mc M)$,
\begin{equation}
\begin{array}{rcll}
\Phi(\sigma)&=&W\sigma W^*,&\sigma\in\mc S(\mc M)\\
\iota(\sigma')&=&W^*\sigma' W+\tr{(\id_{\mc M}-WW^*)\sigma'}\sigma_0,&\sigma'\in\mc S(\mc M') \, , 
\end{array}
\end{equation}
where $\sigma_0$ is some positive trace-1 operator on $\mc M$. Through direct calculation, one sees that $\Phi\circ\Lambda_{\msf A}=\Lambda'_{\msf A}$ and $\iota\circ\Lambda'_{\msf A}=\Lambda_{\msf A}$. This means that the equivalence class $[\Lambda_{\msf A}]$ does not depend on the choice of the dilation and the dilation does not have to be minimal.

Let $\msf A:\Omega_{\msf A}\to\mc L(\hil)$ and $\Bo:\Omega_{\Bo}\to\mc L(\hil)$ be observables and $\Lambda_{\msf A}$ and $\Lambda_{\Bo}$ be the associated least disturbing channels defined as in \eqref{eq:leastdistr0} with respect to some (minimal) Na\u{\i}mark dilations of $\msf A$ and, respectively, $\Bo$. It was shown in \cite{HeMi13} that $\Bo\psleq\msf A$ if and only if $\Lambda_{\msf A}\psleq\Lambda_{\Bo}$. The physical interpretation of this result is that the more informative observable we measure, the more the measurement disturbs the system. Especially, $\msf A\simeq\Bo$ if and only if $\Lambda_{\msf A}\simeq\Lambda_{\Bo}$.

\subsection{Interlude: minimally sufficient observables}\label{subsec:minsuff}

A special case of post-processing of observables is where $p$ only has entries among $\{0,1\}$. This means that there is a function $f:\Omega_{\Ao}\to\Omega_{\Bo}$ such that $p(k|j)=\delta_{k,f(j)}$, where $\delta_{k,\ell}$ is the Kronecker delta, implying that
\begin{equation}
\Bo(k)=\sum_{j\in f^{-1}(k)}\Ao(j),
\end{equation}
where $f^{-1}(k)$ is the pre-image of $k\in\Omega_{\Bo}$. We then say that $\Bo$ is a {\it relabeling of $\Ao$}, and that $\msf A$ is a {\it refinement} of $\msf B$.

The following definition introduced in \cite{Kuramochi15b} characterizes the observables with minimum informational redundancy.

\begin{definition}\label{def:minsuff}
An observable $\Ao:\Omega_{\Ao}\to\mc L(\hil)$ is {\it minimally sufficient} if, whenever an observable $\Bo$ is post-processing equivalent to $\Ao$, then $\Bo$ is a refinement of $\Ao$.
\end{definition}

A non-vanishing observable $\Ao:\Omega_{\Ao}\to\mc L(\hil)$, i.e.,\ $\Ao(j)\neq0$ for all $j\in\Omega_{\Ao}$, is minimally sufficient if and only if, whenever $i,\,j\in\Omega_{\Ao}$, $i\neq j$, there is no $c>0$ such that $\Ao(i)=c\Ao(j)$. An important fact is that, as shown in \cite{Kuramochi15b}, for any observable $\Ao:\Omega_{\Ao}\to\mc L(\hil)$ (with a separable Hilbert space $\hil$), there is a minimally sufficient representative $\tilde{\Ao}\in[\Ao]$ obtained as a relabeling from $\Ao$, and this minimally sufficient representative is (essentially) unique up to a bijective relabeling of its values. The observable $\tilde{\Ao}$ can be constructed as follows: Define the equivalence relation $\sim$ within $\Omega_{\Ao}$ by declaring $i\sim j$ if there is $c>0$ such that $\Ao(i)=c\Ao(j)$. Denote by $\tilde{\Omega}$ the set $\Omega_{\Ao}/\sim$ of equivalence classes. The observable $\tilde{\Ao}:\tilde{\Omega}\to\mc L(\hil)$ obtained as the following relabeling
\begin{equation}
\tilde{\Ao}([j])=\sum_{i\in[j]}\Ao(i),\qquad [j]\in\tilde{\Omega}
\end{equation}
is post-processing equivalent with $\Ao$ and minimally sufficient. 

There is a useful equivalent characterization for minimal sufficiency for a discrete observable $\Ao:\Omega_{\Ao}\to\mc L(\hil)$ (indeed, for any observable with a standard Borel value space): $\Ao$ is minimally sufficient if and only if $\Ao^p=\Ao$ with some probability matrix $p$ implies $p(i|j)=\delta_{i,j}$ for all $i,\,j\in\Omega_{\Ao}$.

To see this, assume first that $\Ao^p=\Ao$ with some probability matrix $p$ implies $p(i|j)=\delta_{i,j}$ for all $i,\,j\in\Omega_{\Ao}$. Let now $\Bo:\Omega_{\Bo}\to\mc L(\hil)$ be an observable $\Bo\simeq\Ao$, i.e.,\ there are probability matrices $q_1$ and $q_2$ such that $\Bo=\Ao^{q_1}$ and $\Ao=\Bo^{q_2}$. Without loss of generality, we may assume that $\Bo$ is non-vanishing. Defining the probability matrix $p(\cdot|\cdot):\Omega_{\Ao}\times\Omega_{\Ao}\to[0,1]$, $p(i|j)=\sum_{r\in\Omega_{\Bo}}q_2(i|r)q_1(r|j)$, $i,\,j\in\Omega_{\Ao}$, we have $\Ao^p=\Ao$. Thus, $p(i|j)=\delta_{i,j}$ for all $i,\,j\in\Omega_{\Ao}$. Using Cauchy-Schwartz inequality, we have for every $i,\,j\in\Omega_{\Ao}$
\begin{equation}
\begin{array}{r}
\delta_{i,j}   =\sum_{k\in\Omega_{\Ao}}\delta_{k,i}\delta_{k,j}=\sum_{k\in\Omega_{\Ao}}\sqrt{\sum_{r,s\in\Omega_{\Bo}}q_2(k|r)q_1(r|i)q_2(k|s)q_1(s|j)}\\
 \geq \sum_{k\in\Omega_{\Ao}}\sum_{r\in\Omega_{\Bo}}q_2(k|r)\sqrt{q_1(r|i)q_2(r|j)}=\sum_{r\in\Omega_{\Bo}}\sqrt{q_1(r|i)q_2(r|j)}
\end{array}
\end{equation}
This implies that whenever $i\neq j$, $q_1(r|i)q_1(r|j)=0$ for all $r\in\Omega_{\Bo}$. This, in turn, means that there is a partition $\Omega_{\Bo}=\bigcup_{j\in\Omega_{\Ao}}\Omega_j$ such that $p(r|j)\neq0$ only if $r\in\Omega_j$. Using the fact that $\Bo$ is non-vanishing, one immediately sees that $p(r|j)\neq0$ if and only if $r\in\Omega_j$.

Suppose that there are $i\in\Omega_{\Ao}$ and $r\in\Omega_{\Bo}$ such that $0<q_2(i|r)<1$. Let $j\in\Omega_{\Ao}$ be the outcome such that $r\in\Omega_j$. There are two possibilities: $i=j$ or $i\neq j$. If $i=j$, using $q_2(i|r)<1$, one obtains
\begin{equation}
1=\sum_{s\in\Omega_{\Bo}}q_2(i|s)q_1(s|i)<\sum_{s\in\Omega_j}q_1(s|i)=1 \, , 
\end{equation}
a contradiction. 
If $i\neq j$, we get
\begin{equation}
0=\sum_{s\in\Omega_{\Bo}}q_2(i|s)q_1(s|i)\geq q_2(i|r)q_1(r|j) \, , 
\end{equation}
which, since $q_2(i|r)\neq0$, implies $q_1(r|j)=0$. However, this is impossible, since $r\in\Omega_j$. Thus, $q_2(i|r)\in\{0,1\}$ for all $i\in\Omega_{\Ao}$ and all $r\in\Omega_{\Bo}$, meaning that $q_2$ is a coarse-graining. This means that $\Ao$ is minimally sufficient.

Suppose now that $\Ao$ is minimally sufficient and let $p$ be a probability matrix such that $\Ao^p=\Ao$. Without loss of generality, we may assume that $\Ao$ is non-vanishing. Define the observable $\Bo:\Omega_{\Ao}\times\Omega_{\Ao}\to\mc L(\hil)$,
\begin{equation}
\Bo(i,j)=p(i|j)\Ao(j) \, ,\qquad i,\,j\in\Omega_{\Ao} \, .
\end{equation}
Clearly, $\Bo\simeq\Ao$. Define the function $f:\Omega_{\Ao}\times\Omega_{\Ao}\to\Omega_{\Ao}$, $f(i,j)=i$ for all $i,\,j\in\Omega_{\Ao}$. It follows that
\begin{equation}
\sum_{(k,j)\in f^{-1}(i)}\Bo(k,j)=\sum_{j\in\Omega_{\Ao}}p(i|j)\Ao(j)=\Ao(i),\quad i\in\Omega_{\Ao} \, , 
\end{equation}
i.e.,\ the relabeling mediated by $f$ takes $\Bo$ into $\Ao\simeq\Bo$. In statistical terms, this means that this coarse-graining (statistic) is {\it sufficient} for $\Bo$ and, according to \cite[Prop. 5]{Kuramochi15b}, there is a function $h:\Omega_{\Ao}\times\Omega_{\Ao}\to[0,\infty)$ and a positive-operator-valued function $G:\Omega_{\Ao}\to\mc L(\hil)$ such that
\begin{equation}\label{eq:suffstat}
p(i|j)\Ao(j)=\Bo(i,j)=h(i,j)(G\circ f)(i,j)=h(i,j)G(i),\quad i,\,j\in\Omega_{\Ao}.
\end{equation}
Summing over $j$ in \eqref{eq:suffstat}, and denoting $H(i)=\sum_jh(i,j)$, we obtain $H(i)G(i)=\Ao(i)$ for all $i\in\Omega_{\Ao}$. Since $\Ao$ is non-vanishing, $H(i)\neq0$ for all $i\in\Omega_{\Ao}$ and, substituting this to \eqref{eq:suffstat}, we get
\begin{equation}
\frac{h(i,j)}{H(i)}\Ao(i)=p(i|j)\Ao(j),\qquad i,\,j\in\Omega_{\Ao}.
\end{equation}
Make now the counter assumption that there are $i,\,j\in\Omega_{\Ao}$, $i\neq j$, such that $p(i|j)\neq0$. 
Now $h(i,j)\neq0$, for otherwise \eqref{eq:suffstat} would imply $\Ao(j)=0$. 
Hence,
\begin{equation}
\Ao(i)=p(i|j)\frac{H(i)}{h(i,j)}\Ao(j),
\end{equation}
where the factor before $\Ao(j)$ is positive. According to our earlier characterization for minimal sufficiency of a discrete POVM, this is impossible. It follows that $p(i|j)=\delta_{i,j}$ for all $i,\,j\in\Omega_{\Ao}$.

\subsection{Characterization of least disturbing channels}\label{subsec:char}

Any representative of the equivalence class $[\Lambda_{\msf A}]$ also deserves to be called as least disturbing, and this is what we will subsequently do. 

Recall the definition of $\mfr C_\Ao$-universality (Definition \ref{def:chuniversal}). As has already been mentioned, the channels $\Lambda_{\msf A}$ as defined in \eqref{eq:leastdistr0} and \eqref{eq:leastdistr} are $\mfr C_\Ao$-universal for $\msf A$. Hence all the least disturbing channels have this universality property. However, the converse is also true: Suppose that $\Phi$ is $\mfr C_\Ao$-universal for $\msf A$. Since $\Phi$ is compatible with $\msf A$, $\Phi\psleq\Lambda_{\msf A}$ with some $\Lambda_{\msf A}$ as defined in \eqref{eq:leastdistr0}. Moreover, since this $\Lambda_{\msf A}$ is compatible with $\msf A$, the definition of $\mfr C_\Ao$-universality implies that $\Lambda_{\msf A}\psleq\Phi$. Hence $\Phi\in[\Lambda_{\msf A}]$. We find that the class $[\Lambda_{\msf A}]$ of least disturbing channels for $\msf A$ can equivalently be characterized as the set of $\mfr C_\Ao$-universal channels for $\msf A$.

In this subsection, we characterize the least disturbing channels associated to a discrete observable. Let us first state and prove a simple useful lemma.

\begin{lemma}\label{lemma:suppLambda}
Suppose that $\msf A:\Omega_{\msf A}\to\mc L(\hil)$ is an observable and $(\mc M,\msf P,J)$ is a minimal Na\u{\i}mark dilation for $\msf A$. Let $\Lambda_{\msf A}:\mc S(\hil)\to\mc S(\mc M)$ be the least disturbing channel,
\begin{equation}
\Lambda_{\msf A}(\varrho)=\sum_{j\in\Omega_{\msf A}}\msf P(j)J\varrho J^*\msf P(j),\qquad\varrho\in\mc S(\hil).
\end{equation}
Whenever $B\in\mc L(\mc M)$ is positive, $\Lambda_{\msf A}^*(B)=0$ implies $B=0$.
\end{lemma}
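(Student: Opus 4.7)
The plan is to compute $\Lambda_{\msf A}^*$ explicitly, use positivity of $B$ to peel off the individual summands of the dual, and then exploit the minimality of the Na\u{\i}mark dilation to conclude that $B$ annihilates a total set of vectors in $\mc M$.

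First I would write the dual map as
\begin{equation}
\Lambda_{\msf A}^*(B)=\sum_{j\in\Omega_{\msf A}}J^*\msf P(j)B\msf P(j)J,\qquad B\in\mc L(\mc M),
\end{equation}
which follows directly from the definition of $\Lambda_\Ao$ and the trace duality. When $B\geq 0$, each operator $\msf P(j)B\msf P(j)$ is positive, hence so is every term $J^*\msf P(j)B\msf P(j)J$. Consequently, the hypothesis $\Lambda_\Ao^*(B)=0$ forces $J^*\msf P(j)B\msf P(j)J=0$ for every $j\in\Omega_\Ao$, because a sum of positive operators can vanish only if each summand does.

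Next, writing $B=C^*C$ with $C=\sqrt{B}\geq 0$, the vanishing condition rewrites as $(C\msf P(j)J)^*(C\msf P(j)J)=0$, so $C\msf P(j)J=0$ for each $j$. Multiplying on the left by $C$, this yields $B\msf P(j)J\varphi=0$ for all $\varphi\in\hil$ and all $j\in\Omega_\Ao$. The main step is now to invoke minimality of the dilation: by definition, the closed linear span of the vectors $\{\msf P(j)J\varphi\mid j\in\Omega_\Ao,\,\varphi\in\hil\}$ is all of $\mc M$. Since $B$ is bounded, vanishing on a total set forces $B=0$.

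No real obstacle arises here; the argument is a straightforward exploitation of the minimality property, combined with the standard observation that a sum of positive operators equals zero only termwise. The only subtlety worth pointing out is that without minimality the conclusion fails, which is consistent with the preceding discussion that only the equivalence class $[\Lambda_\Ao]$ is invariantly associated with $\Ao$, while the particular representative \eqref{eq:leastdistr0} depends on having a minimal dilation.
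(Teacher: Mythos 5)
Your proof is correct, and it reaches the conclusion by a slightly different route than the paper. Both arguments start the same way: positivity of $B$ makes each summand $J^*\msf P(j)B\msf P(j)J$ of $\Lambda_\Ao^*(B)$ positive, so all of them vanish individually. The paper then deduces $\msf P(j)B\msf P(j)=0$ for each $j$ (using that the vectors $\msf P(j)J\fii$ span a dense subspace of $\msf P(j)\mc M$) and still has to dispose of the off-diagonal blocks $\msf P(i)B\msf P(j)$, $i\neq j$; it does so by passing to an orthonormal basis diagonalizing $\msf P$ and invoking the standard $2\times 2$ positivity argument showing that a positive operator with vanishing diagonal is zero. Your factorization $B=C^*C$ with $C=\sqrt{B}$ short-circuits that second step: from $J^*\msf P(j)C^*C\msf P(j)J=0$ you get $C\msf P(j)J=0$ outright, and then minimality of the dilation --- the totality of $\{\msf P(j)J\fii\}$ in all of $\mc M$ --- kills $C$, hence $B$, in one stroke, with no need to discuss off-diagonal blocks at all. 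The two proofs use the same two ingredients (positivity and minimality), but your use of the square root makes the finish more economical; the paper's version has the minor side benefit of isolating the intermediate fact $\msf P(j)B\msf P(j)=0$, which is the block-diagonal statement it reuses in spirit elsewhere. Your closing observation that the conclusion fails for a non-minimal dilation is also correct: the projection onto the orthogonal complement of the closed span of $\{\msf P(j)J\fii\}$ is a nonzero positive operator annihilated by $\Lambda_\Ao^*$.
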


\begin{proof}
Let $B\in\mc L(\mc M)$ be positive and $\Lambda_{\msf A}^*(B)=0$. We find for all $\fii\in\hil$
\begin{equation}
0=\<\fii|\Lambda_{\msf A}^*(B)\fii\>=\sum_{j\in\Omega_{\msf A}}\<\msf P(j)J\fii|B\msf P(j)J\fii\>
\end{equation}
and, since all the summands are non-negative and vectors $\msf P(j)J\fii$, $\fii\in\hil$ span $\msf P(j)\mc M$ for each $j\in\Omega_{\msf A}$, we have $\msf P(j)B\msf P(j)=0$ for all $j\in\Omega_{\msf A}$.

Let $\{e_n\}_n\subset\mc M$ be an orthonormal basis diagonalizing $\msf P$. Especially, $\<e_n|Be_n\>=0$ for all $n$. For any $m$ and $n$,
\begin{equation}
0\leq
\left(\begin{array}{cc}
\<e_m|Be_m\>&\<e_m|Be_n\>\\
\<e_n|Be_m\>&\<e_n|Be_n\>
\end{array}
\right)=
\left(\begin{array}{cc}
0&\<e_m|Be_n\>\\
\<e_n|Be_m\>&0
\end{array}
\right) \, , 
\end{equation}
which is possible if and only if $\<e_m|Be_n\>=0$. Since this holds for any $m$ and $n$, $B=0$.
\end{proof}

Especially the above lemma implies that, whenever $\Lambda_{\msf A}$ is obtained from a minimal Na\u{\i}mark dilation $(\mc M,\msf P,J)$ of $\msf A$, the support projection (see the appendix, Lemma \ref{lemma:suppPhi-Sapfo}) of $\Lambda_{\msf A}$ is $\id_{\mc M}$. Indeed, if $R\in\mc L(\mc M)$ is a projection such that $\Lambda_{\msf A}^*(R)=\id_\hil$, then $\Lambda_{\msf A}^*(\id_{\mc M}-R)=0$. According to Lemma \ref{lemma:suppLambda}, this means that $R=\id_{\mc M}$.

\begin{proposition}\label{prop:MinLeast}
Let $\msf A:\Omega_{\msf A}\to\mc L(\hil)$ be a minimally sufficient observable and $(\mc M,\msf P,J)$ a minimal Na\u{\i}mark dilation for $\msf A$. Fix the least disturbing channel $\Lambda_{\msf A}:\mc S(\hil)\to\mc S(\mc M)$ associated with this dilation. For any channel $\Gamma:\mc S(\mc M)\to\mc S(\mc M)$ such that $\Gamma\circ\Lambda_{\msf A}=\Lambda_{\msf A}$, one has $\Gamma\circ\mathbb E_{\msf P}=\mathbb E_{\msf P}$, where $\mathbb E_{\msf P}:\mc S(\mc M)\to\mc S(\mc M)$ is the L\"uders channel,
\begin{equation}
\mathbb E_{\msf P}(\varrho)=\sum_{j\in\Omega_{\msf A}}\msf P(j)\varrho\msf P(j) \, , \qquad\varrho\in\mc L(\mc M) \, .
\end{equation}
\end{proposition}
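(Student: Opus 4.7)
The plan is to combine the uniqueness of Kraus decompositions with the characterization of minimal sufficiency derived in Subsection~\ref{subsec:minsuff}. Write $\Gamma(\sigma)=\sum_\alpha K_\alpha\sigma K_\alpha^*$ with $\sum_\alpha K_\alpha^*K_\alpha=\id_\mc M$. Since the blocks $\msf P(l)\mc M$ are mutually orthogonal and $\Ao(l)\ne 0$ for every $l\in\Omega_{\msf A}$ (we may discard any vanishing values without losing minimal sufficiency), the canonical Kraus family $\{\msf P(l)J\}_{l}$ of $\Lambda_{\msf A}$ is linearly independent and hence a minimal Kraus family. The hypothesis $\Gamma\circ\Lambda_{\msf A}=\Lambda_{\msf A}$ exhibits $\{K_\alpha\msf P(j)J\}_{\alpha,j}$ as a second Kraus family of the same CP map, so by Kraus uniqueness there exist scalars $v_{\alpha,j,l}$ with
\begin{equation}\label{eq:kraus-exp}
K_\alpha\msf P(j)J=\sum_l v_{\alpha,j,l}\,\msf P(l)J,\qquad \sum_{\alpha,j}v_{\alpha,j,l}\overline{v_{\alpha,j,l'}}=\delta_{l,l'}.
\end{equation}

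Next I would extract information at the observable level by computing $\sum_\alpha\no{K_\alpha\msf P(j)J\fii}^2$ in two ways. Directly, $\sum_\alpha K_\alpha^*K_\alpha=\id_\mc M$ gives $\no{\msf P(j)J\fii}^2=\<\fii|\Ao(j)\fii\>$; using~\eqref{eq:kraus-exp} together with the orthogonality $\msf P(l)J\fii\perp\msf P(l')J\fii$ for $l\ne l'$, the same quantity equals $\sum_l q(l|j)\<\fii|\Ao(l)\fii\>$, where $q(l|j):=\sum_\alpha|v_{\alpha,j,l}|^2$. As $\fii$ is arbitrary, this yields $\Ao(j)=\sum_l q(l|j)\Ao(l)$ for every $j\in\Omega_{\msf A}$. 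Specializing $l=l'$ in~\eqref{eq:kraus-exp} supplies the normalization $\sum_j q(l|j)=1$, so $p(j|l):=q(l|j)$ is a stochastic matrix satisfying $\Ao^p=\Ao$. The minimal-sufficiency criterion of Subsection~\ref{subsec:minsuff} (``$\Ao^p=\Ao$ with $p$ stochastic implies $p=\mathrm{id}$'') then forces $q(l|j)=\delta_{l,j}$; hence $v_{\alpha,j,l}=0$ for $l\ne j$ and $\sum_\alpha|v_{\alpha,j,j}|^2=1$.

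Returning to~\eqref{eq:kraus-exp}, $K_\alpha\msf P(j)J\fii=v_{\alpha,j,j}\,\msf P(j)J\fii$ for all $\fii\in\hil$, so
\begin{equation*}
\Gamma\bigl(\kb{\msf P(j)J\fii}{\msf P(j)J\psi}\bigr)=\Bigl(\sum_\alpha|v_{\alpha,j,j}|^2\Bigr)\kb{\msf P(j)J\fii}{\msf P(j)J\psi}=\kb{\msf P(j)J\fii}{\msf P(j)J\psi}.
\end{equation*}
By the minimality of the Na\u{\i}mark dilation the vectors $\msf P(j)J\fii$ span $\msf P(j)\mc M$, so the above rank-one operators span the block $\msf P(j)\mc L(\mc M)\msf P(j)$. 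By linearity, $\Gamma$ acts as the identity on each such block and therefore fixes every block-diagonal operator $\mathbb E_{\msf P}(\sigma)=\sum_j\msf P(j)\sigma\msf P(j)$, giving $\Gamma\circ\mathbb E_{\msf P}=\mathbb E_{\msf P}$ as required.

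The main obstacle will be identifying the correct reformulation in which minimal sufficiency actually bites: the identity $\Ao(j)=\sum_l q(l|j)\Ao(l)$ alone does not determine $q$, and the stochasticity $\sum_j q(l|j)=1$ that turns it into the post-processing equation $\Ao^p=\Ao$ is precisely the ``orthonormal columns'' statement built into Kraus uniqueness for the minimal Kraus family $\{\msf P(l)J\}$. Only after combining these two ingredients does the criterion of Subsection~\ref{subsec:minsuff} close the argument.
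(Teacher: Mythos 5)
Your proof is correct, and it takes a genuinely different and more elementary route than the paper's. The paper works in the Heisenberg picture: it forms the mean-ergodic average of $\tilde\Gamma=\Gamma\circ\mathbb E_{\msf P}$ to obtain a conditional expectation $\mathbb E^*$ onto the $\sigma$-weakly closed fixed-point algebra of $\tilde\Gamma^*$ (using the Schwarz inequality together with Lemma \ref{lemma:suppLambda} to see that the fixed points form an algebra inside the multiplicative domain), then shows that every central projection of that algebra is a sum of the $\msf P(j)$ by building an auxiliary observable $\msf B\simeq\msf A$ and invoking minimal sufficiency, and finally concludes $\mathbb E=\mathbb E_{\msf P}$. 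You instead compare the two Kraus families $\{\msf P(l)J\}_l$ and $\{K_\alpha\msf P(j)J\}_{\alpha,j}$ of $\Lambda_{\msf A}$ via the uniqueness theorem for Kraus decompositions and feed the resulting isometry coefficients into the same criterion ($\Ao^p=\Ao$ forces $p(i|j)=\delta_{i,j}$) established in Subsection \ref{subsec:minsuff}; your closing observation that the needed normalization $\sum_jq(l|j)=1$ is precisely the column-orthonormality supplied by Kraus uniqueness is the right diagnosis of where the argument closes. Your route buys brevity and avoids the ergodic and conditional-expectation machinery entirely, at the mild cost of fixing a Kraus decomposition of $\Gamma$ (available since $\mc M$ is separable); the paper's route yields, as a by-product, the identification of the full fixed-point conditional expectation with $\mathbb E_{\msf P}$. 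Two cosmetic points to tighten in a final write-up: the rank-one operators $\kb{\msf P(j)J\fii}{\msf P(j)J\psi}$ span $\msf P(j)\mc T(\mc M)\msf P(j)$ only densely in trace norm, which suffices because $\Gamma$ is trace-norm continuous; and the passage from the quadratic-form identity $\<\fii|\Ao(j)\fii\>=\sum_l q(l|j)\<\fii|\Ao(l)\fii\>$ to the operator identity deserves a word (polarization, together with boundedness of the increasing partial sums when $\Omega_{\msf A}$ is infinite).
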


\begin{proof}
For simplicity, we work in the dual (Heisenberg picture) in this proof. Let $\Gamma$ be a channel as in the claim. Denote $\tilde{\Gamma}=\Gamma\circ\mathbb E_{\msf P}$. Note that the range of $\tilde{\Gamma}^*$ is contained in the subalgebra $\mc D\subset\mc L(\mc M)$ of those operators $D$ commuting with $\msf P$. Let $F\in\mc L(\mc M)$ be such that $\tilde{\Gamma}^*(F)=F$; obviously $F\in\mc D$. Using the Schwartz inequality,
\begin{equation}
\tilde{\Gamma}^*(F^*F)\geq\tilde{\Gamma}^*(F)^*\tilde{\Gamma}^*(F)=F^*F \, .
\end{equation}
Now, 
\begin{equation}
\Lambda_{\msf A}^*\big(\tilde{\Gamma}^*(F^*F)-F^*F\big)=\Lambda_{\msf A}^*(F^*F)-\Lambda_{\msf A}^*(F^*F)=0
\end{equation}
implying, since $\tilde{\Gamma}^*(F^*F)-F^*F\geq0$, that $\tilde{\Gamma}^*(F^*F)=F^*F$ using Lemma \ref{lemma:suppLambda}. Hence the set of fixed points for $\tilde{\Gamma}^*$ is a subalgebra of the multiplicative domain of $\tilde{\Gamma}^*$, the von Neumann algebra consisting of those $B\in\mc L(\mc M)$ such that $\tilde{\Gamma}^*(BC)=\tilde{\Gamma}^*(B)\tilde{\Gamma}^*(C)$ and $\tilde{\Gamma}^*(CB)=\tilde{\Gamma}^*(C)\tilde{\Gamma}^*(B)$ for all $C\in\mc L(\mc M)$. Let us denote the $\sigma$-weak closure of the fixed point set by $\mc F$; this is a von Neumann algebra. Let $\mathbb E^*:\mc L(\mc M)\to\mc F$ be the conditional expectation whose existence is guaranteed, e.g.,\ by \cite{KuNa79}. Thus $\mathbb E$ is a normal completely positive unital map such that $\mathbb E^*(F_1BF_2)=F_1\mathbb E^*(B)F_2$ for all $F_1,\,F_2\in\mc F$ and $B\in\mc L(\mc M)$ and $\mathbb E^*\circ\tilde{\Gamma}^*=\tilde{\Gamma}^*\circ\mathbb E^*=\mathbb E^*$. Moreover, $\Lambda_{\msf A}^*\circ\mathbb E^*=\Lambda_{\msf A}^*$. In fact, in the Schr\"odinger picture,
\begin{equation}
\mathbb E=\lim_{N\to\infty}\frac{1}{N}\sum_{n=0}^{N-1}\tilde{\Gamma}^n,
\end{equation}
where $\tilde{\Gamma}^0$ is the identity map and the limit is with respect to the operator norm for trace-norm-continuous mappings within the trace class of $\mc M$. In what follows, we prove that the centres of $\mc F$ and $\mc D$ coincide, and using this result, we easily see that, in fact, $\mathbb E=\mathbb E_{\msf P}$. Thus, we first study the structure of the centre of $\mc F$. We first characterize the central projections of $\mc F$ and then use the spectral theorem to characterize the whole centre.

Denote the centre $\mc F\cap\mc F'=:\mc Z(\mc F)$ and let $Q\in\mc Z(\mc F)\subset\mc D$ be a projection. Define the set $\Omega_{\msf B}:=\Omega_{\msf A}\times\{0,1\}$, the projection-valued measure $\msf Q:\Omega_{\msf B}\to\mc L(\mc M)$,
\begin{equation}
\msf Q(j,0)=\msf P(j)Q,\quad\msf Q(j,1)=\msf P(j)(\id_{\mc M}-Q),\qquad j\in\Omega_{\msf A} \, , 
\end{equation}
and the observable $\msf B:\Omega_{\msf B}\to\mc L(\hil)$, $\msf B=\Lambda_{\msf A}^*\circ\msf Q$. Clearly, $(\mc M,\msf Q,J)$ is a minimal Na\u{\i}mark dilation for $\msf B$. Let $\Lambda_{\msf B}$ be the least disturbing channel for $\msf B$ defined by this dilation, i.e.,\ $\Lambda_{\msf B}^*(B)=\sum_{j\in\Omega_{\msf A}}J^*\msf P(j)\big(QBQ+(\id_{\mc M}-Q)B(\id_{\mc M}-Q)\big)\msf P(j)J$. Using the properties of the conditional expectation $\mathbb E$, we find
\begin{eqnarray*}
\Lambda_{\msf B}^*(B)&=&\sum_{j\in\Omega_{\msf A}}J^*\msf P(j)\big(QBQ+(\id_{\mc M}-Q)B(\id_{\mc M}-Q)\big)\msf P(j)J\\
&=&\Lambda_{\msf A}^*\big(QBQ+(\id_{\mc M}-Q)B(\id_{\mc M}-Q)\big)\\
&=&(\Lambda_{\msf A}^*\circ\mathbb E^*)\big(QBQ+(\id_{\mc M}-Q)B(\id_{\mc M}-Q)\big)\\
&=&\Lambda_{\msf A}^*\big(Q\mathbb E^*(B)Q+(\id_{\mc M}-Q)\mathbb E^*(B)(\id_{\mc M}-Q)\big)\\
&=&(\Lambda_{\msf A}^*\circ\mathbb E^*)(B)=\Lambda_{\msf A}^*(B).
\end{eqnarray*}
In the second to last equality, e.g.,\ we have used the fact that $Q\in\mc Z(\mc F)$ and the fact that the range of $\mathbb E^*$ is $\mc F$. Thus, especially, $\msf A\simeq\msf B$.

Next we show that $Q$ is a sum of the projections $\msf P(j)$. In order to do this, the minimal sufficiency of $\msf A$ is used. Let $p=\big(p(i,r|j)\big)_{i,j\in\Omega_{\msf A},\ r\in\{0,1\}}$ be a probability matrix such that $\msf B=\msf A^p$. Thus, defining $q(i|j)=p(i,0|j)+p(i,1|j)$ for all $i,\,j\in\Omega_{\msf A}$, we have $\msf A^q=\msf A$, and since $\msf A$ is minimally sufficient, $q(i|j)=\delta_{i,j}$ for all $i,\,j\in\Omega_{\msf A}$. If, for some $i,\,j\in\Omega_{\msf A}$, $0<p(i,0|j)<1$, then automatically $i=j$, and there must be $k\neq i$ and $r\in\{0,1\}$ such that $p(k,r|i)>0$. Thus, denoting $s=2^{-1}\big(1+(-1)^r\big)$,
\begin{equation}
0=\delta_{k,i}=p(k,r|i)+p(k,s|i)>0 \, , 
\end{equation}
a contradiction. Thus, for all $i,\,j\in\Omega_{\msf A}$ and $r\in\{0,1\}$, $p(i,r|j)\in\{0,1\}$, implying that there is a function $\chi:\Omega_{\msf B}\to\{0,1\}$ such that $\msf B(i,r)=\chi(i,r)\msf A(i)$ for all $(i,r)\in\Omega_{\msf A}$. Pick $j\in\Omega_{\msf A}$ and $\fii\in\hil$ and denote $\eta:=\msf P(j)J\fii$. Now,
\begin{equation}
\<\eta|\msf P(j)Q\eta\>=\<\fii|\msf B(j,0)\fii\>=\chi(j,0)\<\eta|\eta\>
\end{equation}
and, since vectors like $\eta$ span $\msf P(j)\mc M$, we have $\msf P(j)Q=\chi(j,0)\msf P(j)$. Thus $Q=\sum_{j\in\Omega_{\msf A}}\msf P(j)Q=\sum_{j\in\Omega_{\msf A}}\chi(j,0)\msf P(j)$: all projections in $\mc Z(\mc F)$ are sums of $\msf P(j)$.

Let $E\in\mc Z(\mc F)$, $0\leq E\leq\id_{\mc M}$. According to the above and the spectral theorem, there are $e_j\in[0,1]$, $j\in\Omega_{\msf A}$, such that $E=\sum_{j\in\Omega_{\msf A}}e_j\msf P(j)$. Especially, there are $e(i|j)\in[0,1]$, $i,\,j\in\Omega_{\msf A}$ such that $\mathbb E^*\big(\msf P(i)\big)=\sum_{j\in\Omega_{\msf A}}e(i|j)\msf P(j)$ for all $i\in\Omega_{\msf A}$. It is simply checked that $e=\big(e(i|j)\big)_{i,\,j\in\Omega_{\msf A}}$ is a probability matrix and $\msf A^e=\msf A$. Thus, again, $e(i|j)=\delta_{i,j}$ for all $i,\,j\in\Omega_{\msf A}$ implying $\mathbb E^*\circ\msf P=\msf P$. This means that $\mc F$ and $\mc D$ have a common centre consisting of all the complex linear combinations of $\msf P(j)$, $j\in\Omega_{\msf A}$.

Next, we show that $\mathbb E=\mathbb E_{\msf P}$ which completes the proof. Assume that $B\in\mc L\big(\msf P(j)\mc M\big)$ for some $j\in\Omega_{\msf A}$. Note that, since $\mathbb E^*(B)^*\mathbb E^*(B)\leq\mathbb E^*(B^*B)\leq\|B\|^2\mathbb E^*\big(\msf P(j)\big)=\|B\|^2\msf P(j)$, one has $\mathbb E^*(B)\in\mc L\big(\msf P(j)\mc M\big)$. Let $\fii\in\hil$ and denote $\eta=\msf P(j)J\fii$. We have
\begin{eqnarray*}
\<\eta|B\eta\>&=&\<\fii|\Lambda_{\msf A}^*(B)\fii\>=\<\fii|(\Lambda_{\msf A}^*\circ\mathbb E^*)(B)\fii\>\\
&=&\<\fii|(\Lambda_{\msf A}^*\circ\mathbb E^*)\big(\msf P(j)B\msf P(j)\big)\fii\>\\
&=&\<\fii|\Lambda_{\msf A}^*\big(\msf P(j)\mathbb E^*(B)\msf P(j)\big)\fii\>=\<\eta|\mathbb E^*(B)\eta\>.
\end{eqnarray*}
Since vectors like $\eta$ span $\msf P(j)\mc M$, we have $\mathbb E^*(B)=B$. Thus we obtain for any $B\in\mc L(\mc M)$
\begin{eqnarray*}
\mathbb E_{\msf P}^*(B)&=&\sum_{j\in\Omega_{\msf A}}\msf P(j)B\msf P(j)=\sum_{j\in\Omega_{\msf A}}\mathbb E^*\big(\msf P(j)B\msf P(j)\big)\\
&=&\sum_{j\in\Omega_{\msf A}}\msf P(j)\mathbb E^*(B)\msf P(j)=\sum_{j\in\Omega_{\msf A}}\msf P(j)\mathbb E^*(B)=\mathbb E^*(B).
\end{eqnarray*}
Thus $\mathbb E=\mathbb E_{\msf P}$ and $\mathbb E_{\msf P}^*\circ\Gamma^*=\mathbb E_{\msf P}^*\circ\tilde{\Gamma}^*=\mathbb E^*\circ\tilde{\Gamma}^*=\mathbb E^*=\mathbb E_{\msf P}^*$.
\end{proof}

Any channel $\Lambda:\mc S(\hil)\to\mc S(\mc K)$ compatible with an observable $\msf A$ has the form $\Lambda=\Gamma\circ\Lambda_{\msf A}$ with some channel $\Gamma$ and $\Lambda_{\msf A}$ defined as in \eqref{eq:leastdistr0} with a minimal Na\u{\i}mark dilation $(\mc M,\msf P,J)$ of $\msf A$. Let the channel $\mathbb E_{\msf P}$ be as in Proposition \ref{prop:MinLeast}. Since $\Lambda_{\msf A}=\mathbb E_{\msf P}\circ\Lambda_{\msf A}$, we have $\Lambda=\tilde{\Gamma}\circ\Lambda_{\msf A}$, where $\tilde{\Gamma}:=\Gamma\circ\mathbb E_{\msf P}$. Denoting, for each $j\in\Omega_\Ao$, $\Gamma_j:=\Gamma|_{\mc S\big(\msf P(j)\mc M\big)}$, we have
\begin{equation}\label{eq:compch}
\Lambda(\varrho)=\sum_{j\in\Omega_{\msf A}}\Gamma_j\big(\Po(j)J\varrho J^*\Po(j)\big),\quad \varrho\in\mc S(\hil).
\end{equation}

\begin{corollary}\label{lemma:fiberwise_id}
Suppose that $\msf A$ and $\Lambda_{\msf A}$ are as in the claim of Proposition \ref{prop:MinLeast}. Let $\Lambda\in[\Lambda_{\msf A}]$, $\Lambda:\mc S(\hil)\to\mc S(\mc K)$, $\mc K$ a Hilbert space, and let $\Gamma_j:\mc S\big(\msf P(j)\mc M\big)\to\mc S(\mc K)$, $j\in\Omega_{\msf A}$, be channels such that \eqref{eq:compch} holds. For each $j\in\Omega_{\msf A}$, $\Gamma_j\simeq\mathrm{id}_{\mc S\big(\msf P(j)\mc M\big)}$.
\end{corollary}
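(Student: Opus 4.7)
My plan is to reduce the corollary to Proposition \ref{prop:MinLeast} by exploiting both directions of the post-processing equivalence $\Lambda\simeq\Lambda_{\msf A}$, and then to manufacture a concrete left inverse for each $\Gamma_j$.

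First I will use $\Lambda\in[\Lambda_{\msf A}]$: besides the channel $\Gamma:\mc S(\mc M)\to\mc S(\mc K)$ producing the decomposition \eqref{eq:compch}, there must exist a channel $\Gamma':\mc S(\mc K)\to\mc S(\mc M)$ with $\Lambda_{\msf A}=\Gamma'\circ\Lambda$. Composing the two relations gives $(\Gamma'\circ\Gamma)\circ\Lambda_{\msf A}=\Lambda_{\msf A}$, and since $\Gamma'\circ\Gamma$ is a channel from $\mc S(\mc M)$ to itself, Proposition \ref{prop:MinLeast} applies and yields $(\Gamma'\circ\Gamma)\circ\mathbb E_{\msf P}=\mathbb E_{\msf P}$. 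I would next evaluate this operator identity block by block: for $\sigma\in\mc S(\msf P(j)\mc M)$, regarded canonically as a state on $\mc M$ supported on the $j$-th block, one has $\mathbb E_{\msf P}(\sigma)=\sigma$ and, by definition of the restriction, $\Gamma(\sigma)=\Gamma_j(\sigma)$, so substitution produces $\Gamma'\bigl(\Gamma_j(\sigma)\bigr)=\sigma$ for every $\sigma\in\mc S(\msf P(j)\mc M)$.

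To upgrade this into $\mathrm{id}_{\mc S(\msf P(j)\mc M)}\psleq\Gamma_j$, I will repackage $\Gamma'$ into a bona fide channel $\Gamma_j'':\mc S(\mc K)\to\mc S(\msf P(j)\mc M)$ by absorbing the off-$j$-block trace into a fixed state $\tau_j\in\mc S(\msf P(j)\mc M)$; such a state exists because minimal sufficiency of $\msf A$ keeps $\msf A(j)$, and hence $\msf P(j)\mc M$, non-zero. Setting
\[
\Gamma_j''(\omega):=\msf P(j)\Gamma'(\omega)\msf P(j)+\tr{(\id_{\mc M}-\msf P(j))\Gamma'(\omega)}\,\tau_j,
\]
one obtains a manifestly completely positive and trace-preserving map, and the fiber identity above gives $\Gamma_j''\circ\Gamma_j=\mathrm{id}_{\mc S(\msf P(j)\mc M)}$. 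Together with the trivial direction $\Gamma_j\psleq\mathrm{id}_{\mc S(\msf P(j)\mc M)}$ (every channel is a post-processing of the identity), this establishes $\Gamma_j\simeq\mathrm{id}_{\mc S(\msf P(j)\mc M)}$.

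The conceptual obstacle, cleared by Proposition \ref{prop:MinLeast}, is forcing the fixed-point equation $(\Gamma'\circ\Gamma)\circ\Lambda_{\msf A}=\Lambda_{\msf A}$ to simplify fiber by fiber: the L\"uders channel $\mathbb E_{\msf P}$ is exactly the tool that projects the equation into the block-diagonal subalgebra where each $\msf P(j)\mc M$ can then be treated independently; once that projection is available, the remaining work is the routine construction of $\Gamma_j''$ above.
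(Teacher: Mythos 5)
Your argument is correct and follows essentially the same route as the paper: assemble $\Gamma$ from the $\Gamma_j$, take $\Gamma'$ with $\Lambda_{\msf A}=\Gamma'\circ\Lambda$, apply Proposition \ref{prop:MinLeast} to $\Gamma'\circ\Gamma$ to get $(\Gamma'\circ\Gamma)\circ\mathbb E_{\msf P}=\mathbb E_{\msf P}$, and read off a block-wise left inverse of $\Gamma_j$ from the compression $\msf P(j)\Gamma'(\cdot)\msf P(j)$. Your only departure is the correction term $\tr{(\id_{\mc M}-\msf P(j))\Gamma'(\omega)}\,\tau_j$ that makes this left inverse trace-preserving on all of $\mc S(\mc K)$ rather than merely on the range of $\Gamma_j$, which is a slightly more careful normalization than the paper's bare $\Gamma'_j=\msf P(j)\Gamma'(\cdot)\msf P(j)$.
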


\begin{proof}
Define channel $\Gamma:\mc S(\mc M)\to\mc S(\mc K)$ as
\begin{equation}
\Gamma(\sigma)=\sum_{j\in\Omega_{\msf A}}\Gamma_j\big(\Po(j)\sigma\Po(j)\big),\qquad \sigma\in\mc S(\mc M) \, , 
\end{equation}
so that $\Lambda=\Gamma\circ\Lambda_{\msf A}$. 
Let $\Gamma':\mc S(\mc K)\to\mc S(\mc M)$ be a channel such that $\Lambda_{\msf A}=\Gamma'\circ\Lambda$. Proposition \ref{prop:MinLeast} implies that $\Gamma'\circ\Gamma\circ\mathbb E_{\msf P}=\mathbb E_{\msf P}$, and thus, for all $j\in\Omega_{\msf A}$ and all $\sigma\in\mc S(\mc M)$, $(\Gamma\circ\Gamma_j)(\sigma)=\msf P(j)\sigma\msf P(j)$. Defining $\Gamma'_j:=\Po(j)\Gamma'(\cdot)\Po(j)$ for all $j\in\Omega_{\msf A}$, we get $\Gamma'_j\circ\Gamma_j=\mathrm{id}_{\mc S\big(\msf P(j)\mc M\big)}$.
\end{proof}

To clarify the equivalence class $[\Lambda_{\msf A}]$, we need to define the notion of Stinespring dilations and discuss their basic properties.

\begin{definition}
Suppose that $\mc K$ is another Hilbert space and $\Lambda:\mc S(\hil)\to\mc S(\mc K)$ is a channel. A pair $(\mc N,V)$ consisting of a Hilbert space $\mc N$ and an isometry $V:\hil\to\mc K\otimes\mc N$ is a {\it Stinespring dilation for $\Lambda$} if $\Lambda^*(B)=V^*(B\otimes\id_{\mc N})V$ for all $B\in\mc L(\mc K)$. This dilation is {\it minimal} if the vectors $(B\otimes\id_{\mc N})V\fii$, $B\in\mc L(\mc K)$, $\fii\in\hil$, span a dense subspace of $\mc K\otimes\mc N$.
\end{definition}

Every channel has a Stinespring dilation and among the dilations there is always a minimal one which is unique up to unitary equivalence \cite{Stinespring55}. Suppose that $(\mc N,V)$ is a minimal Stinespring dilation for a channel $\Lambda:\mc S(\hil)\to\mc S(\mc K)$ and $(\mc N',V')$ is another not necessarily minimal Stinespring dilation for $\Lambda$. There is an isometry $W:\mc N\to\mc N'$ such that $V'=(\id_{\mc K}\otimes W)V$.

The next theorem states that, for any observable $\msf A$, the class $[\Lambda_{\msf A}]$ of least disturbing channels consists just of the channels $\Lambda_{\msf B}$ with observables $\msf B$ whose measurements yield the same classical information as those of $\msf A$.

\begin{theorem}\label{theor:leastdistreq}
Let $\msf A:\Omega_{\msf A}\to\mc L(\hil)$ be an observable, $\Omega_{\msf A}\subset\N$, and $\Lambda_{\msf A}$ be as in \eqref{eq:leastdistr0}. The class of least disturbing channels is
$$
[\Lambda_{\msf A}]=\{\Lambda_{\Bo}\,|\,\Bo\ \mr{an\ observable,}\ \Bo\simeq\msf A\}.
$$
\end{theorem}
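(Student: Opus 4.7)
The plan is to prove the two inclusions separately. The inclusion $\supseteq$ is immediate from the equivalence $\Bo\simeq\msf A\iff\Lambda_\msf A\simeq\Lambda_\Bo$ recalled from \cite{HeMi13}, so the substantive work lies in the reverse inclusion: showing that every $\Lambda\in[\Lambda_\msf A]$ is itself a least disturbing channel of some observable $\Bo\simeq\msf A$. For the reverse direction I would first reduce to the case in which $\msf A$ is minimally sufficient via Subsection \ref{subsec:minsuff}: replacing $\msf A$ by its minimally sufficient representative $\tilde{\msf A}$ leaves both $[\Lambda_\msf A]=[\Lambda_{\tilde{\msf A}}]$ and $\{\Bo\mid\Bo\simeq\msf A\}=\{\Bo\mid\Bo\simeq\tilde{\msf A}\}$ unchanged, so assume $\msf A$ is minimally sufficient with minimal Na\u{\i}mark dilation $(\mc M,\msf P,J)$. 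Then, for any $\Lambda\in[\Lambda_\msf A]$, $\Lambda:\mc S(\hi)\to\mc S(\mc K)$, Corollary \ref{lemma:fiberwise_id} supplies the fiber-wise form
\[
\Lambda(\varrho)=\sum_{j\in\Omega_\msf A}\Gamma_j\bigl(\msf P(j)J\varrho J^*\msf P(j)\bigr),
\]
with each $\Gamma_j:\mc S(\msf P(j)\mc M)\to\mc S(\mc K)$ post-processing equivalent to $\mathrm{id}_{\mc S(\msf P(j)\mc M)}$, and in particular admitting a CPTP left-inverse.

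I would then extract two structural facts about the fibers $\Gamma_j$. The first is the standard observation that any channel with a CPTP left-inverse is necessarily an isometric conjugation $\Gamma_j(\sigma)=W_j\sigma W_j^*$ for some isometry $W_j:\msf P(j)\mc M\to\mc K$: this follows from a Knill--Laflamme--type analysis, since the condition $\tilde{\Gamma}_j\circ\Gamma_j=\mathrm{id}$ forces every Kraus representation of $\Gamma_j$ to collapse to a single isometry. The second, and crucial, new point is that the ranges $W_j\msf P(j)\mc M$ are mutually orthogonal in $\mc K$. To prove this I would use the global recovery channel $\Gamma':\mc S(\mc K)\to\mc S(\mc M)$ produced in the proof of Corollary \ref{lemma:fiberwise_id}: a simple trace-accounting argument (the state $\Gamma'(W_j\eta_j\eta_j^*W_j^*)$ has its $(j,j)$-block equal to the pure state $\eta_j\eta_j^*$, hence total trace already $1$, and by positivity of block matrices its off-diagonal blocks vanish too) shows that $\Gamma'$ recovers $\eta_j\eta_j^*$ exactly from $W_j\eta_j\eta_j^*W_j^*$ for every unit $\eta_j\in\msf P(j)\mc M$. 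For $j\neq k$ the recovered states $\eta_j\eta_j^*$ and $\eta_k\eta_k^*$ are supported on orthogonal subspaces of $\mc M$, hence at trace distance $2$; non-expansivity of the CPTP map $\Gamma'$ forces the same for their pre-images in $\mc K$, i.e.\ $\|W_j\eta_j\eta_j^*W_j^*-W_k\eta_k\eta_k^*W_k^*\|_1=2$, which is equivalent to $\langle W_j\eta_j,W_k\eta_k\rangle=0$.

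With isometric fibers and orthogonal ranges available, the required Na\u{\i}mark dilation is immediate: set $\msf Q(j):=W_jW_j^*$ for $j\in\Omega_\msf A$ and $\msf Q(\infty):=\id_\mc K-\sum_j W_jW_j^*$, so $\msf Q$ is a PVM on $\mc K$ indexed by $\Omega_\msf A\cup\{\infty\}$, and define $J_\Lambda:=\sum_j W_j\msf P(j)J:\hi\to\mc K$; orthogonality of the ranges gives $J_\Lambda^*J_\Lambda=\sum_j J^*\msf P(j)J=\id_\hi$, so $J_\Lambda$ is an isometry. A direct computation then yields $\msf Q(j)J_\Lambda=W_j\msf P(j)J$ and $\msf Q(\infty)J_\Lambda=0$, whence
\[
\Lambda(\varrho)=\sum_k \msf Q(k)J_\Lambda\varrho J_\Lambda^*\msf Q(k),
\]
exhibiting $\Lambda$ as the least disturbing channel attached to the Na\u{\i}mark dilation $(\mc K,\msf Q,J_\Lambda)$ of the observable $\Bo(k):=J_\Lambda^*\msf Q(k)J_\Lambda$, which coincides with $\msf A$ on $\Omega_\msf A$ and vanishes at $\infty$, so $\Bo\simeq\msf A$. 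The hard part will be the orthogonality of the $W_j$-ranges via contractivity of $\Gamma'$; once this structural information about the reversible fibers has been extracted, everything else is bookkeeping.
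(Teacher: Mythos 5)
Your overall architecture (reduce to the minimally sufficient case, invoke Corollary \ref{lemma:fiberwise_id} for the fiber-wise form, extract structure of the fibers $\Gamma_j$, assemble a Na\u{\i}mark dilation) matches the paper's, but your first ``structural fact'' is false and this is a genuine gap. A channel admitting a CPTP left-inverse need \emph{not} be a single isometric conjugation: the correct statement is the paper's Lemma \ref{lemma:eqid}(iv), namely that such a channel is a \emph{probabilistic mixture} $\sigma\mapsto\sum_n p_n V_n\sigma V_n^*$ of isometries with mutually orthogonal ranges ($V_n^*V_m=\delta_{nm}\id$). The Knill--Laflamme conditions on the full input space give $E_i^*E_j=\alpha_{ij}\id$ for the Kraus operators, and diagonalizing $\alpha$ yields exactly this random-isometry form, not a single isometry. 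The gap is not cosmetic for this theorem: taking $\Bo(j,n)=p_{j,n}\msf A(j)$ with non-degenerate distributions $p_{j,\cdot}$ gives $\Bo\simeq\msf A$ and hence $\Lambda_{\Bo}\in[\Lambda_{\msf A}]$ by the easy inclusion, yet the fibers of $\Lambda_{\Bo}$ are genuine mixtures of isometries; your argument as written simply does not apply to these elements of $[\Lambda_{\msf A}]$, which are precisely the ones that make the theorem say ``$\Bo\simeq\msf A$'' rather than ``$\Bo=\msf A$ up to padding by zero effects.'' Relatedly, your concluding claim that the constructed $\Bo$ ``coincides with $\msf A$ on $\Omega_{\msf A}$'' is too strong; the correct outcome set is the doubly indexed $\{(j,n)\}$ with $\Bo(j,n)=p_{j,n}\msf A(j)$, which is post-processing equivalent to $\msf A$ but generally not a relabelled copy of it.

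The good news is that the rest of your argument survives the correction essentially verbatim, and your second step is a genuinely different (and arguably cleaner) route than the paper's. Where the paper establishes the cross-fiber orthogonality \eqref{eq:nullify} by intertwining Stinespring dilations and invoking minimal sufficiency of $\msf A$ a second time, your argument --- exact recoverability of each pure state $\kb{\eta}{\eta}$, $\eta\in\msf P(j)\mc M$, by the trace-accounting observation, followed by saturation of the contractivity of $\Gamma'$ in trace norm to force orthogonal supports upstream --- uses only Corollary \ref{lemma:fiberwise_id} and standard properties of the trace distance. With the corrected fiber structure $\Gamma_j(\sigma)=\sum_{n}p_{j,n}V_{j,n}\sigma V_{j,n}^*$, the support of $\Gamma_j(\kb{\eta}{\eta})$ is the span of the vectors $V_{j,n}\eta$ with $p_{j,n}>0$, so your orthogonality conclusion becomes $\langle V_{j,n}\eta,V_{k,m}\xi\rangle=0$ for $j\neq k$; combined with the within-fiber orthogonality $V_{j,n}^*V_{j,m}=\delta_{nm}\msf P(j)$ from Lemma \ref{lemma:eqid}(iv), the bookkeeping at the end goes through with $\msf Q(j,n)$ the projection onto the range of $V_{j,n}\msf P(j)$ and $K=\sum_{j,n}\sqrt{p_{j,n}}\,V_{j,n}\msf P(j)J$, recovering the paper's conclusion $\Bo(j,n)=p_{j,n}\msf A(j)\simeq\msf A$. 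So: repair the fiber structure via Lemma \ref{lemma:eqid}(iv), keep your trace-distance argument, and adjust the final dilation to the double index.
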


\begin{proof}
It suffices to prove the claim in the case where $\msf A$ is minimally sufficient. Namely, if this was not the case, we would find a minimally sufficient observable $\tilde{\msf A}\simeq\msf A$ with a channel $\Lambda_{\tilde{\msf A}}$ defined by some dilation of $\tilde{\msf A}$ analogously to \eqref{eq:leastdistr0} and \eqref{eq:leastdistr}. Since $\tilde{\msf A}\simeq\msf A$, we have $\Lambda_{\tilde{\msf A}}\simeq\Lambda_{\msf A}$, implying
\begin{equation}
[\Lambda_{\msf A}]=\{\Lambda\,|\,\Lambda\simeq\Lambda_{\msf A}\}=\{\Lambda\,|\,\Lambda\simeq\Lambda_{\tilde{\msf A}}\}=[\Lambda_{\tilde{\msf A}}] \, .
\end{equation}

Let us assume that $\msf A$ is minimally sufficient and $\Lambda:\mc S(\hil)\to\mc S(\mc K)$ is a channel, $\Lambda\in[\Lambda_{\msf A}]$. Suppose that $\Lambda_{\msf A}$ is defined by a minimal Na\u{\i}mark dilation $(\mc M,\msf P,J)$ as in \eqref{eq:leastdistr0} and denote $\msf P(j)\mc M=\mc M_j$ for all $j\in\Omega_{\msf A}$. Let the channels $\Gamma_j:\mc S(\mc M_j)\to\mc S(\mc K)$ be as in \eqref{eq:compch}.

There is a Hilbert space $\mc N$ such that, for any $j\in\Omega_{\msf A}$, the channel $\Gamma_j$ has a (not necessarily minimal) dilation $(\mc N,V_j)$. Define $\ell^2(\Omega_{\msf A})=:\mc K_{\msf A}$ with the canonical orthonormal basis $\{e_j\}_{j\in\Omega_{\msf A}}$. One may construct the Stinespring dilation $(\mc N\otimes\mc K_{\msf A},V)$ for $\Lambda$,
\begin{equation}
V\fii=\sum_{j\in\Omega_{\msf A}}V_j\msf P(j)J\fii\otimes e_j,\qquad\fii\in\hil \, .
\end{equation}
Since $\Lambda_{\msf A}\psleq\Lambda$, there is a channel $\Phi:\mc L(\mc M)\to\mc L(\mc K)$ such that $\Lambda_{\msf A}=\Lambda\circ\Phi$. Let $(\mc N_\Phi,V_\Phi)$ be a dilation of $\Phi$, and define the minimal dilation $(\mc K_{\msf A},V_{\msf A})$ of $\Lambda_{\msf A}$,
\begin{equation}
V_{\msf A}\fii=\sum_{j\in\Omega_{\msf A}}\msf P(j)J\fii\otimes e_j \, .
\end{equation}
It follows that $\big(\mc N_\Phi\otimes\mc N\otimes\mc K_{\msf A},(V_\Phi\otimes\id_{\mc N}\otimes\id_{\mc K_{\msf A}})V\big)$ is a Stinespring dilation for $\Lambda_{\msf A}$ implying that there is an isometry $W:\mc K_{\msf A}\to\mc N_\Phi\otimes\mc N\otimes\mc K_{\msf A}$ such that
\begin{equation}\label{eq:intertwine}
(\id_{\mc M}\otimes W)V_{\msf A}=(V_\Phi\otimes\id_{\mc N}\otimes\id_{\mc K_{\msf A}})V.
\end{equation}

For each $j\in\Omega_{\msf A}$, there is a sequence of vectors $\eta_{j,k}\in\mc N_\Phi\otimes\mc N$, $k\in\Omega_{\msf A}$, such that $We_j=\sum_k\eta_{j,k}\otimes e_k$. Since $W$ is an isometry, one finds that
\begin{equation}\label{eq:orth}
\sum_{k\in\Omega_{\msf A}}\<\eta_{i,k}|\eta_{j,k}\>=\delta_{i,j},\qquad i,\,j\in\Omega_{\msf A}.
\end{equation}
Equation \eqref{eq:intertwine} implies that
\begin{equation}
(V_\Phi\otimes\id_{\mc N})V_k\msf P(k)J\fii=\sum_{i\in\Omega_{\msf A}}\msf P(i)J\fii\otimes\eta_{i,k},\quad k\in\Omega_{\msf A}, \fii\in\hil \, .
\end{equation}
Taking the norm squared of the above equation, one obtains
\begin{equation}
\msf A(k)=\sum_{j\in\Omega_{\msf A}}\|\eta_{j,k}\|^2\msf A(j),\quad k\in\Omega_{\msf A} \, .
\end{equation}
Since $\msf A$ is minimally sufficient, this together with \eqref{eq:orth} means that $\eta_{j,k}=\delta_{j,k}\eta_j$, where $\{\eta_j\}_{j\in\Omega_{\msf A}}\subset\mc N_\Phi\otimes\mc N$ is an orthonormal system such that $We_j=\eta_j\otimes e_j$ for all $j\in\Omega_{\msf A}$. Equation \eqref{eq:intertwine} now implies
\begin{equation}
\msf P(j)J\fii\otimes\eta_j=(V_\Phi\otimes\id_{\mc N})V_j\msf P(j)J\fii,\qquad\fii\in\hil \, , 
\end{equation}
which can be used to show that, whenever $i\neq j$ and $F\in\mc L(\mc N)$,
\begin{equation}\label{eq:nullify}
\msf P(i)V_i^*(\id_{\mc M}\otimes F)V_j\msf P(j)=0.
\end{equation}

According to Corollary \ref{lemma:fiberwise_id}, $\Gamma_j\simeq\mr{id}_{\mc S(\mc M_j)}$ for all $j\in\Omega_{\msf A}$ which, in turn according to Lemma \ref{lemma:eqid}, is equivalent, for each $j\in\Omega_{\msf A}$, with the existence of index sets $X_j$, probability distributions $\{p_{j,n}\}_{n\in X_j}$, isometries $\{V_{j,n}:\mc M_j\to\mc K\}_{n\in X_j}$, and orthonormal systems $\{b_{j,n}\}_{n\in X_j}\subset\mc N_0$ where the Hilbert space $\mc N_0$ is fixed, such that $V_{jn}^*V_{jm}=\delta_{n,m}\msf P(j)$ and
\begin{equation}
V_j\xi=\sum_{n\in X_j}\sqrt{p_{j,n}}V_{j,n}\xi\otimes b_{j,n},\quad\xi\in\mc M_j.
\end{equation}
With respect to our earlier notations, we may choose $\mc N_0=\mc N$. 
Fix $i,\,j\in\Omega_{\msf A}$, $i\neq j$, $n\in X_i$, and $m\in X_j$, and set $F=|b_{i,n}\>\<b_{j,m}|$ in \eqref{eq:nullify}. 
This yields
\begin{equation}
\sqrt{p_{i,n}p_{j,m}}\<V_{i,n}\msf P(i)\psi|V_{j,m}\msf P(j)\psi\>=0,\qquad\psi\in\mc M \, .
\end{equation}

The above means that, defining the set $\Omega_{\Bo}:=\{(j,n)\,|\,n\in X_j,\ j\in\Omega_{\msf A}\}$, we may define the PVM $\msf Q:\Omega_{\Bo}\to\mc L(\mc K)$ where, for each $j\in\Omega_{\msf A}$ and $n\in X_j$, $\msf Q(j,n)$ is the orthogonal projection onto the range of $V_{j,n}\msf P(j)$. It is simple to check that
\begin{equation}
K:=\sum_{j\in\Omega_{\msf A}}\sum_{n\in X_j}\sqrt{p_{j,n}}V_{j,n}\msf P(j)J
\end{equation}
is an isometry and
\begin{equation}
\Lambda(\varrho)=\sum_{j\in\Omega_{\msf A}}\sum_{n\in X_j}\msf Q(j,n)K\varrho K^*\msf Q(j,n),\qquad\varrho\in\mc S(\hil) \, .
\end{equation}
It thus follows that, defining the observable $\Bo:\Omega_{\Bo}\to\mc L(\hil)$, $\Bo(j,n)=K^*\msf Q(j,n)K$, $\Lambda$ is a least disturbing channel associated to $\Bo$ defined by the Na\u{\i}mark dilation $(\mc K,\msf Q,K)$ of $\Bo$. Moreover, it follows immediately that $\Bo(j,n)=p_{j,n}\msf A(j)$ for all $(j,n)\in\Omega_{\Bo}$, so that $\msf A\simeq\Bo$.

It remains to be shown that, whenever $\Lambda=\Lambda_{\Bo}$ is a channel defined by a dilation of some observable $\Bo\simeq\msf A$ as in \eqref{eq:leastdistr}, then $\Lambda\simeq\Lambda_{\msf A}$. This is, however, immediate since, according to \cite{HeMi13}, $\Bo\simeq\msf A$ is equivalent with $\Lambda_{\Bo}\simeq\Lambda_{\msf A}$.
\end{proof}

\subsection{Minimal output dimension }

When we fix an observable $\msf A:\Omega_{\msf A}\to\mc L(\hil)$ and define the least disturbing channel $\Lambda_{\msf A}$ with respect to a minimal Na\u{\i}mark dilation $(\mc M,\msf P,J)$ of $\msf A$ as in \eqref{eq:leastdistr0}, the set $[\Lambda_{\msf A}]$ of all least disturbing channels associated to $\msf A$ possesses an essentially uniquely defined representative of special interest:  the channel $\Lambda_{\tilde{\msf A}}$ defined as in \eqref{eq:leastdistr0} by some minimal Na\u{\i}mark dilation of the essentially unique minimally sufficient representative $\tilde{\msf A}$ of the equivalence class $[\msf A]$. Proposition \ref{prop:MinLeast} implies that $\Lambda_{\tilde{\msf A}}$ is a minimal representative of $[\Lambda_{\msf A}]$ in the sense that it can withstand no added `quantum noise', i.e.,\ if $\Lambda_{\tilde{\msf A}}$ remains invariant upon concatenation with some channel $\Gamma$, the channel $\Gamma$ must leave all the decomposable states $\sigma=\bigoplus_{j\in\Omega_{\msf A}}\sigma_j$, $\sigma_j\in\mc S\big(\msf P(j)\mc M\big)$, invariant. An equivalent formulation of Proposition \ref{prop:MinLeast} would be that, for any channel $\Gamma:\mc L(\mc M)\to\mc L(\mc M)$ such that $\Gamma\circ\Lambda_{\msf A}=\Lambda_{\msf A}$, $\mc D$, the algebra of those operators on $\mc M$ commuting with $\msf P$, is contained in the fixed-point space of $\Gamma^*$.

Suppose that $\dim{\hil}<\infty$. It now follows that, among the class $[\Lambda_{\msf A}]$, $\Lambda_{\tilde{\msf A}}$ has the Schr\"odinger output Hilbert space with the lowest dimensionality, which is hence
\begin{equation}
\sum_{k\in\Omega_{\tilde{\msf A}}}{\rm rank}\,\tilde{\msf A}(k) \, .
\end{equation}
Indeed, let $\Lambda\in[\Lambda_{\msf A}]$ meaning that there is an observable $\msf B:\Omega_{\msf B}\to\mc L(\hil)$, $\msf B\simeq\msf A$, such that $\Lambda=\Lambda_{\msf B}$, where $\Lambda_{\msf B}$ is defined by some Na\u{\i}mark dilation $(\mc N,\msf Q,K)$ of $\msf B$ as in \eqref{eq:leastdistr}. It follows that the Schr\"odinger-output dimension of $\Lambda$ is $\dim{\mc N}\geq\sum_{i\in\Omega_{\msf B}}{\rm rank}\,\msf B(i)$. Because of the construction for $\tilde{\msf A}$ given in Section \ref{subsec:minsuff}, there is a function $f:\Omega_{\msf B}\to\Omega_{\tilde{\msf A}}$ such that $\tilde{\msf A}(k)=\sum_{i\in f^{-1}(k)}\msf B(i)$. We may again freely assume that $\msf B(i)\neq0$ for all $i\in\Omega_{\msf B}$. Since, for all $i\in f^{-1}(k)$ and all $k\in\Omega_{\tilde{\msf A}}$, $\msf B(i)$ is a positive multiple of $\tilde{\msf A}(k)$, it now follows that ${\rm rank}\,\msf B(i)={\rm rank}\,\tilde{\msf A}(k)$ for all $i\in f^{-1}(k)$ and all $k\in\Omega_{\tilde{\msf A}}$, implying 
that 
\begin{equation}
\sum_{k\in\Omega_{\tilde{\msf A}}}{\rm rank}\,\tilde{\msf A}(k)\leq\sum_{i\in\Omega_{\msf B}}{\rm rank}\,\msf B(i)\leq\dim{\mc N} \, .
\end{equation}

Since $\tilde{\msf A}$ is a POVM and it satisfies $\sum_{k}\tilde{\msf A}(k)=\id$, we have $\sum_k \tr{\tilde{\msf A}(k)} = \dim\hi$. 
As $\tr{\tilde{\msf A}(k)} \leq {\rm rank}\,\tilde{\msf A}(k)$ and the equality holds if and only if $\tilde{\msf A}(k)$ is a projection, we conclude that \emph{the minimal output dimension of $\msf A$ is strictly greater than the input dimension unless the minimally sufficient representative $\tilde{\msf A}$} is sharp. 

\section{Sequential measurements of two observables}

As already stated, whenever $\msf A$ is an observable, any least disturbing channel $\Lambda_\Ao$ is $\mfr O_\Ao$-universal for $\Ao$ since the least disturbing class $[\Lambda_\Ao]$ of channels is exactly the set of $\mfr C_\Ao$-universal channels. It follows that all $\mfr C_\Ao$-universal channels (i.e.\ those within the class $[\Lambda_\Ao]$) are $\mfr O_\Ao$-universal. It might be that there exists an $\mfr O_\Ao$-universal channel which is not concatenation equivalent with $\Lambda_\Ao$, but we have no example of such an observable $\Ao$. However, in this section, we see that if $\Ao$ is extreme, these universality properties coincide.

\subsection{Universal channels for extreme observables}\label{subsec:extuniversal}

In this section, we show that, in order to be $\mfr O_\Ao$-universal for an extreme observable $\msf A$, a channel has to be in the least disturbing class $[\Lambda_{\msf A}]$. First let us recall the physical meaning of extreme observables and characterizations of extremality.

When we have two measurement settings in our disposal, we may classically mix the observables implemented by these measurements. One can, e.g.,\ toss a (biased) coin and carry out one of the measurements conditioned by the result of the coin tossing. To formalize what this means, let us fix an outcome set $\Omega\subset\N$, the set of outcomes of the observables we concentrate on, and the system Hilbert space $\hil$. Let us denote by ${\bf Obs}(\Omega,\hil)$ the set of observables $\msf A:\Omega\to\mc L(\hil)$ with the fixed set of outcomes. The scenario above is an indication of the fact that ${\bf Obs}(\Omega,\hil)$ is a convex set where the rule of convex combinations is given by the following definition.

\begin{definition}
Suppose that $\msf A,\,\msf B\in{\bf Obs}(\Omega,\hil)$ and $t\in[0,1]$. We define the {\it mixing} $t\msf A+(1-t)\msf B\in{\bf Obs}(\Omega,\hil)$ of $\msf A$ and $\msf B$ as
$$
\big(t\msf A+(1-t)\msf B\big)(j)=t\msf A(j)+(1-t)\msf B(j),\qquad j\in\Omega.
$$
\end{definition}

One can inductively define similar mixings
$$
t_1\msf A_1+\cdots+t_n\msf A_n\in{\bf Obs}(\Omega,\hil),\quad \msf A_1,\ldots,\,\msf A_n\in{\bf Obs}(\Omega,\hil)
$$
whenever $t_k\geq0$, $k=1,\ldots,\,n$, $t_1+\cdots+t_n=1$. Mixing is a mathematical description of the classical mixing of measurement set-ups outlined above.

An observable $\Ao$ is \emph{extreme} if it cannot be expressed as a proper convex mixture of some other observables. To give a proper definition of what this means, consider the set ${\bf Obs}(\Omega,\hil)$ of observables operating on a system described by the Hilbert space $\hil$ and having $\Omega\subset\N$ as their outcome set. An element $\msf B\in{\bf Obs}(\Omega,\hil)$ is an {\it extreme point of ${\bf Obs}(\Omega,\hil)$} if it is an extreme point of the set ${\bf Obs}(\Omega,\hil)$ in the usual convex geometric sense: condition $\msf B=t\msf B_1+(1-t)\msf B_2$ with some $\msf B_1,\,\msf B_2\in{\bf Obs}(\Omega,\hil)$ and some $t\in(0,1)$ implies $\msf B_1=\msf B=\msf B_2$.

We simply say that an observable $\msf A:\Omega_{\msf A}\to\mc L(\hil)$ is {\it extreme} if $\msf A$ is an extreme point of the set ${\bf Obs}(\Omega_{\msf A},\hil)$. This definition formalizes the notion that an extreme observable $\msf A$ cannot be obtained through non-trivial mixing of other observables with the same value space as that of $\msf A$.

Let $(\mc M,\msf P,J)$ be a minimal Na\u{\i}mark dilation of an observable $\Ao:\Omega_{\Ao}\to\mc L(\hil)$. Denote by $\mc D\subset\mc L(\mc M)$ the algebra of (block-diagonal) operators $D\in\mc L(\mc M)$ such that $D\msf P(j)=\msf P(j)D$ for all $j\in\Omega_{\Ao}$. One can show \cite{Arveson69, Pellonpaa11} that $\Ao$ is extreme if and only if the map $\mc D\ni D\mapsto J^*DJ\in\mc L(\hil)$ is injective. An analogous result holds also for continuous observables.

An equivalent extremality characterization can be formulated using spectral decompositions of the effects of the observable if the Hilbert space $\hil$ is finite dimensional. Suppose that, for any $j\in\Omega_{\Ao}$, we have the spectral decomposition
\begin{equation}
\Ao(j)=\sum_{k=1}^{r_j}\kb{d_{j,k}}{d_{j,k}},
\end{equation}
where $r_j$ is the rank of $\Ao(j)$ (we only consider outcomes $j$ corresponding to non-zero $\Ao(j)$). The observable $\Ao$ is extreme if and only if the set $\{\kb{d_{j,k}}{d_{j,\ell}}\,|\,k,\,\ell=1,\ldots,\,r_j,\ j\in\Omega_{\Ao}\}$ is linearly independent \cite{DaLoPe05, Parthasarathy99}. One can derive this extremality characterization by applying the earlier algebraic characterization to the minimal Na\u{\i}mark dilation of Example \ref{ex:naimark}.

If $\Ao$ is extreme, then the set $\{ \Ao(j): j\in\Omega_\Ao \}$ is linearly independent. The linear independence follows immediately from the second characterization of extreme observables, but can also be observed in the following direct way \cite{DaLoPe05}. Assume the converse: the set $\{ \Ao(j): j\in\Omega_\Ao \}$ is linearly dependent. Then $\sum_j c_j \Ao(j)=0$ for some real numbers $c_j$ such that  $c \equiv \sum_j \mo{c_j}\neq 0$. We can thus write $\Ao$ as a mixture
\begin{equation}
\Ao = \half \Ao_+ + \half \Ao_- \, , 
\end{equation}
where $\Ao_{\pm}(j) = (1\pm c_j/c) \Ao(j)$ are two observables different than $\Ao$. Therefore, $\Ao$ is not extreme. This observation implies that, for an extreme observable $\Ao$, $\Ao$ is the essentially unique minimally sufficient representative of the post-processing equivalence class $[\Ao]$. However, not all minimally sufficient observables are extreme since not all post-processing equivalence classes of POVMs contain extreme representatives. Sharp observables are extreme but an extreme POVM does not have to be sharp; see examples, e.g.,\ in \cite{HaPe17,Pellonpaa11}.

The following proposition states that for $\mfr O_\Ao$-universality for an extreme observable nothing less than least disturbing channels is enough. In other words, for an extreme $\Ao$, $\mfr C_\Ao$-universality and $\mfr O_\Ao$-universality are equivalent properties. The proof follows the ideas of those of \cite[Theorem 3]{JePe06}.

\begin{proposition}\label{prop:extuniversal}
Let $\msf A:\Omega_{\msf A}\to\mc L(\hil)$ be an extreme observable. If $\Phi:\mc S(\hil)\to\mc S(\mc K)$ is an $\mfr O_\Ao$-universal channel for $\msf A$, then $\Phi\in[\Lambda_{\msf A}]$.
\end{proposition}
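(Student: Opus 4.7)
The goal is to show both $\Phi\psleq\Lambda_{\msf A}$ and $\Lambda_{\msf A}\psleq\Phi$, so that $\Phi\in[\Lambda_{\msf A}]$. The first relation is immediate: an $\mfr O_\Ao$-universal channel is by definition compatible with $\msf A$, and since $\Lambda_{\msf A}$ is $\mfr C_\Ao$-universal (as recalled at the beginning of Section \ref{subsec:char}), there is a channel $\Phi'$ with $\Phi=\Phi'\circ\Lambda_{\msf A}$. The task is therefore to construct a channel $\Gamma:\mc S(\mc K_\Phi)\to\mc S(\mc M)$ with $\Gamma\circ\Phi=\Lambda_{\msf A}$, where $(\mc M,\msf P,J)$ denotes a minimal Na\u{\i}mark dilation of $\msf A$.

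The central technical input is the following rigidity consequence of $\mfr O_\Ao$-universality combined with extremality of $\msf A$. Every observable compatible with $\msf A$ has the form $\Lambda_{\msf A}^*\circ\msf C$ for some POVM $\msf C$ on $\mc M$; universality thus furnishes, for each such $\msf C$, a POVM $\msf C'$ on $\mc K_\Phi$ with $\Phi^*\circ\msf C'=\Lambda_{\msf A}^*\circ\msf C$, that is,
\[
\Lambda_{\msf A}^*\bigl((\Phi')^*(\msf C'(k))-\msf C(k)\bigr)=0 \quad\text{for all }k.
\]
Compose with the L\"uders conditional expectation $\mathbb E_{\msf P}^*$. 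The identity $\mathbb E_{\msf P}\circ\Lambda_{\msf A}=\Lambda_{\msf A}$ gives $\Lambda_{\msf A}^*\circ\mathbb E_{\msf P}^*=\Lambda_{\msf A}^*$, and the range of $\mathbb E_{\msf P}^*$ is the block-diagonal algebra $\mc D=\{D\in\mc L(\mc M):D\msf P(j)=\msf P(j)D\}$, on which $\Lambda_{\msf A}^*(D)=J^*DJ$. The algebraic extremality characterisation recalled just above the statement---injectivity of $D\mapsto J^*DJ$ on $\mc D$---therefore yields
\[
\mathbb E_{\msf P}^*\bigl((\Phi')^*(\msf C'(k))\bigr)=\mathbb E_{\msf P}^*\bigl(\msf C(k)\bigr) \quad\text{for every }k.
\]
Since every state of the form $\Lambda_{\msf A}(\varrho)$ lies in $\mc D$, the statistics of $\msf C$ on such states coincide with those of $\mathbb E_{\msf P}^*(\msf C)$, so this equality says exactly that $\Phi'$ preserves the statistics of \emph{every} POVM on $\mc M$ on the family $\{\Lambda_{\msf A}(\varrho):\varrho\in\mc S(\hil)\}$.

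This preservation is precisely the hypothesis under which Jen\v{c}ov\'a and Petz produce a recovery channel. Following the argument of \cite[Theorem 3]{JePe06}, one obtains a channel $\Gamma:\mc S(\mc K_\Phi)\to\mc S(\mc M)$ such that $\Gamma\bigl(\Phi'(\Lambda_{\msf A}(\varrho))\bigr)=\Lambda_{\msf A}(\varrho)$ for every $\varrho\in\mc S(\hil)$, hence $\Gamma\circ\Phi=\Lambda_{\msf A}$, which is the required post-processing relation. A more hands-on alternative works at the level of Stinespring and Na\u{\i}mark dilations: applying universality to $\msf A$ itself produces a POVM $\msf A'$ on $\mc K_\Phi$ with $\Phi^*\circ\msf A'=\msf A$; Na\u{\i}mark-dilating $\msf A'$ and composing with the minimal Stinespring dilation of $\Phi$ yields a (generally non-minimal) Na\u{\i}mark dilation of $\msf A$ that can be compared with $(\mc M,\msf P,J)$ via an isometric intertwiner. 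Extremality then forces the off-diagonal blocks of this intertwiner to vanish, as in the derivation of \eqref{eq:nullify}, and a suitable recovery $\Gamma$ can be read off from the resulting block-diagonal factorisation.

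The principal obstacle I anticipate is precisely this last upgrade: the pointwise assignment $\msf C\mapsto\msf C'$ supplied by $\mfr O_\Ao$-universality is a priori neither linear nor canonical, and must be promoted to a single completely positive, trace-preserving map $\Gamma$. The extremality of $\msf A$---equivalent, in the finite-dimensional case, to the linear independence of the set $\{\kb{d_{j,k}}{d_{j,\ell}}\}$ arising from the spectral decompositions of the $\msf A(j)$---is the precise algebraic ingredient that rigidifies these choices enough for the Jen\v{c}ov\'a--Petz recovery machinery (or the Stinespring-level argument) to deliver a bona fide channel, and without extremality there is genuine room for $\Phi$ to be $\mfr O_\Ao$-universal while strictly less informative than $\Lambda_{\msf A}$.
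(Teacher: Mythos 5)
Your reduction of the problem to showing $\Lambda_{\msf A}\psleq\Phi$, and your derivation of the identity $\mathbb E_{\msf P}^*\big((\Phi')^*(\msf C'(k))\big)=\mathbb E_{\msf P}^*\big(\msf C(k)\big)$ from the injectivity of $D\mapsto J^*DJ$ on $\mc D$, are correct and coincide with the opening of the paper's proof, where $\Gamma^*:=\mathbb E_{\msf P}^*\circ\tilde\Gamma^*$ plays the role of your $\mathbb E_{\msf P}^*\circ(\Phi')^*$. The gap is in the next step. What you have established is only that every POVM on $\mc M$ can be \emph{simulated}, statistic by statistic, by some POVM on $\mc K_\Phi$ on the family $\{\Lambda_{\msf A}(\varrho)\}$. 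That is a statistical-comparison (randomization-criterion) statement, and it is \emph{not} the hypothesis of a recovery theorem in \cite{JePe06}: Jen\v{c}ov\'a--Petz sufficiency concerns coarse-grainings preserving relative entropies or admitting a factorization, and the quantum analogue of the Blackwell--Sherman--Stein theorem fails in general --- one family of states can reproduce all measurement statistics of another without any channel mapping the former onto the latter. So the sentence ``this preservation is precisely the hypothesis under which Jen\v{c}ov\'a and Petz produce a recovery channel'' is not a step you may take; it is exactly where the proof must do real work.

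The paper supplies that work as follows, and none of it is optional. One first reduces to the case where the support projection of $\Gamma$ is $\id_{\mc K}$ (Lemma \ref{lemma:suppPhi-Sapfo}). One then applies universality not to arbitrary POVMs but to PVMs $\msf P_\varepsilon$ in $\mc D$ discretizing the spectral measure of an arbitrary positive $D\in\mc D$; extremality gives $\Gamma^*\circ\msf B'_\varepsilon=\msf P_\varepsilon$, and item (iii) of Lemma \ref{lemma:suppPhi-Sapfo} forces the simulating observable $\msf B'_\varepsilon$ to be itself a PVM. The operator $E=\sum_n\sqrt{d_\varepsilon(n)}\,\msf B'_\varepsilon(n)$ then satisfies $\Gamma^*(E^2)=\Gamma^*(E)^2$, i.e., it lies in the multiplicative domain $\mc A$ of $\Gamma^*$, whence $D\in\Gamma^*(\mc A)$ by norm closure. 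Since $\Gamma^*|_{\mc A}$ is an injective normal $*$-homomorphism (injectivity again uses the support condition), it is a $*$-isomorphism onto $\mc D$, and its inverse composed with $\mathbb E_{\msf P}^*$ is the recovery channel $\Psi$ with $\Phi^*\circ\Psi^*=\Lambda_{\msf A}^*$. This multiplicative-domain mechanism is what your phrase ``extremality rigidifies these choices enough'' stands in for, and it is the actual content of the proof. Your ``hands-on alternative'' does not repair the gap either: \eqref{eq:nullify} is derived in the proof of Theorem \ref{theor:leastdistreq} under the hypothesis $\Lambda\in[\Lambda_{\msf A}]$, which is precisely the conclusion you are trying to reach, and applying universality to the single observable $\msf A$ alone gives far too little rigidity to force the relevant intertwiner to be block diagonal.
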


\begin{proof}
Let $\Phi:\mc S(\hil)\to\mc S(\mc K)$ be an $\mfr O_\Ao$-universal channel for $\msf A$ with the Schr\"odinger output space $\mc K$. In this proof, we express channels and other normal CP-maps mainly in the Heisenberg picture. Fix a minimal Na\u{\i}mark dilation $(\mc M,\msf P,J)$ for $\msf A$ and denote by $\mc D$ the commutant of the range of $\msf P$. There is a channel $\tilde\Gamma^*:\mc L(\mc K)\to\mc L(\mc M)$ such that $\Phi^*=\Lambda_{\msf A}^*\circ\tilde\Gamma^*$. Define the channel $\mathbb E_{\msf P}^*:\mc L(\mc M)\to\mc D$,
\begin{equation}
\mathbb E_{\msf P}^*(B)=\sum_{j\in\Omega_{\msf A}}\msf P(j)B\msf P(j),\quad B\in\mc L(\mc M) \, , 
\end{equation}
and define $\Gamma$, $\Gamma^*:=\mathbb E_{\msf P}^*\circ\tilde{\Gamma}^*$. Using the extremality of $\msf A$, one immediately sees that $\Gamma$ is the unique channel with Heisenberg output in $\mc D$ such that $\Phi=\Gamma\circ\Lambda_{\msf A}$.

We may assume that the support projection of $\Gamma$ (see Lemma \ref{lemma:suppPhi-Sapfo}) is $\id_{\mc K}$. Indeed, if this is not the case, i.e.,\ the support $R\neq\id_{\mc K}$, define $R\mc K=:\tilde{\mc K}$. Suppose that $\msf B$ is an observable jointly measurable with $\msf A$. Thus, there is an observable $\msf B'$ in $\mc K$ such that $\msf B=\Phi^*\circ\msf B'$. Define $\tilde{\msf B}:=R\msf B'(\cdot)R$ which we view as an observable in $\tilde{\mc K}$. Using item (i) of Lemma \ref{lemma:suppPhi-Sapfo}, we find
\begin{equation}
\msf B=\Lambda_{\msf A}^*\circ\Gamma^*\circ\msf B'=\Lambda_{\msf A}^*\circ\Gamma^*\circ\tilde{\msf B}=\Phi^*|_{\mc L(\tilde{\mc K})}\circ\tilde{\msf B}
\end{equation}
implying that the channel $\tilde{\Phi}$, $\tilde{\Phi}^*=\Phi^*|_{\mc L(\tilde{\mc K})}$ is also $\mfr O_\Ao$-universal. Let us thus assume that $R=\id_{\mc K}$.

Define the set
$$
\mc A:=\{A\in\mc L(\mc K)\,|\,\Gamma^*(A^*A)=\Gamma^*(A)^*\Gamma^*(A)\ \mr{and}\ \Gamma^*(AA^*)=\Gamma^*(A)\Gamma^*(A)^*\}.
$$
This set is a von Neumann algebra, the algebra of those $A\in\mc L(\mc K)$ such that $\Gamma^*(AB)=\Gamma^*(A)\Gamma^*(B)$ and $\Gamma^*(BA)=\Gamma^*(B)\Gamma^*(A)$ for all $B\in\mc L(\mc K)$ \cite[Section 9.2]{MTOA81}. Since $\Gamma^*|_{\mc A}$ is a normal *-homomorphism, the image space $\Gamma^*(\mc A)\subset\mc D$ is a von Neumann algebra as well. Let $C\in\mc A$ be such that $\Gamma^*(C)=0$. We have $\Gamma^*(C^*C)=\Gamma^*(C)^*\Gamma^*(C)=0$ implying that, since the support of $\Gamma$ is $\id_{\mc K}$, $C^*C=0$. Thus $C=0$ and $\Gamma^*|_{\mc A}$ is injective. Thus $\Gamma$ is a *-isomorphism from $\mc A$ onto $\Gamma^*(\mc A)$.

Next we show that $\Gamma^*(\mc A)=\mc D$. Pick a positive element $D\in\mc D$. For any $\varepsilon>0$, there is a PVM $\msf P_\varepsilon:\N\to\mc D$ and a function $d_\varepsilon:\N\to[0,\|D\|]$ such that, denoting $D_\varepsilon:=\sum_{n\in\N}d_\varepsilon(n)\msf P_\varepsilon(n)$,
\begin{equation}
\|D-D_\varepsilon\|<\varepsilon.
\end{equation}
Such approximate decompositions can be obtained for $D$, e.g.,\ by suitable discretizations of the spectral measure associated with $D$; the domain of $\msf P_\varepsilon$ can even be assumed to be finite. 
Define the observable $\msf B_\varepsilon:\N\to\mc L(\hil)$, $\msf B_\varepsilon(n)=J^*\msf P_\varepsilon(n)J$ for all $n\in\N$, i.e.,\ $\msf B_\varepsilon=\Lambda_{\msf A}^*\circ\msf \varepsilon$. 
Thus $\msf B_\varepsilon$ is jointly measurable with $\msf A$ and there is an observable $\msf B'_\varepsilon:\N\to\mc L(\mc K)$ such that $\msf B_\varepsilon=\Phi^*\circ\msf B'_\varepsilon$. 
Hence, for any $n\in\N$,
\begin{equation}
J^*\Gamma^*\big(\msf B'_\varepsilon(n)\big)J=\msf B_\varepsilon(n)=J^*\msf P_\varepsilon(n)J \, , 
\end{equation}
and, using the extremality of $\msf A$, $\Gamma^*\circ\msf B'_\varepsilon=\msf P_\varepsilon$. Using the item (iii) of Lemma \ref{lemma:suppPhi-Sapfo} and the fact that the support of $\Gamma$ is $\id_{\mc K}$, we find that $\msf B'_\varepsilon$ is a PVM. Define $E=\sum_{n\in\N}\sqrt{d_\varepsilon(n)}\msf B'_\varepsilon(n)$. It follows easily using the functional calculus of the PVM $\msf B'_\varepsilon$ that
\begin{equation}
\Gamma^*(E^2)=D_\varepsilon=\Gamma^*(E)^2 \, .
\end{equation}
Thus $E\in\mc A$ and $D_\varepsilon=\Gamma^*(E^2)\in\Gamma^*(\mc A)$. This holds for any $\varepsilon>0$, and since $\Gamma^*(\mc A)$ is (as a von Neumann algebra) a $C^*$-algebra, $D\in\Gamma(\mc A)$. Since $\Gamma^*(\mc A)$ contains all the positive elements of $\mc D$, $\mc D\subset\Gamma^*(\mc A)$. The converse is trivial, so that $\Gamma^*(\mc A)=\mc D$.

Denote by $\tilde\Psi$ the channel such that $\tilde\Psi^*:\mc D\to\mc A$ the inverse of the *-isomorphism $\Gamma^*|_{\mc A}:\mc A\to\mc D$. Further, define $\Psi$, $\Psi^*=\tilde\Psi^*\circ\mathbb E_{\msf P}^*:\mc L(\mc M)\to\mc A\subset\mc L(\mc K)$. This map is unital and hence a channel. Since, for all $B\in\mc L(\mc M)$, $\mathbb E_{\msf P}^*(B)\in\mc D$ and using the definition of $\tilde{\Psi}$, we have
\begin{eqnarray*}
(\Phi^*\circ\Psi^*)(B)&=&(\Lambda_{\msf A}^*\circ\Gamma^*\circ\tilde{\Psi}^*\circ\mathbb E_{\msf P}^*)(B)=(\Lambda_{\msf A}^*\circ\Gamma^*\circ\tilde{\Psi}^*)\big(\mathbb E_{\msf P}^*(B)\big)\\
&=&\Lambda_{\msf A}^*\big(\mathbb E_{\msf P}^*(B)\big)=\Lambda_{\msf A}^*(B)
\end{eqnarray*}
for all $B\in\mc L(\mc M)$. Hence, $\Lambda_{\msf A}\psleq\Phi$, implying $\Phi\in[\Lambda_{\msf A}]$.
\end{proof}

Note that the use of the $\varepsilon$-treatment in the above proof is there only because we concentrate in this paper on discrete observables. If we allowed for continuous observables, we could use the true spectral measure of the positive $D\in\mc D$ in the above proof instead of its $\varepsilon$-approximate discretization $\msf P_\varepsilon$.

\section{Summary and discussion}

We have discussed ways to carry out quantum devices -- observables and channels -- compatible with a fixed observable $\msf A$ in a sequential setting where $\msf A$ is measured first. The observable $\msf A$ can be measured in such a way that (i) we may realize any device compatible with $\msf A$ after the said measurement or (ii) we may realize any observable compatible with $\msf A$ after the said measurement. The crucial part of the measurement of $\msf A$ for these properties is the channel (unconditioned state transformation) induced by the measurement.

When the measurement of $\msf A$ has the property (i), we say that the corresponding channel is $\mfr C_\Ao$-universal. We have characterized the class of all those channels that possess this universality property: this class is the class of least disturbing channels for any observable $\tilde{\msf A}$ in the post-processing equivalence class of $\msf A$. All channels in this class possess the universality property associated with the property (ii) above, which we call $\mfr O_\Ao$-universality. 
However, it is not clear whether for property (ii) something strictly less is already enough than for property (i). We have the partial result stating that, if $\msf A$ is extreme, $\mfr C_\Ao$-universality and $\mfr O_\Ao$-universality are equivalent properties of channels. Giving a definitive answer to the question regarding the relationship between these two universality properties for any observable $\msf A$ shall be a subject of future studies.

One can generalize the universality properties described above in the following way: Consider a class $\mfr D$ of devices compatible with a fixed observable $\msf A$, i.e.,\ $\mfr D\subset\mfr C_\Ao$. In our framework we may consider observables as channels so that $\mfr O_\Ao\subset\mfr C_\Ao$. We say that a channel $\Psi$ is {\it $\mfr D$-universal for $\msf A$} if $\Psi$ is compatible with $\msf A$ and, for any $\mc D\in\mfr D$, there is some device $\mc D'$ such that $\mc D=\mc D'\circ\Psi$. This means that there is a way to measure $\msf A$ such that any device from $\mfr D$ can be realized after the said measurement of $\msf A$.

Typically in an experimental setting, it is not possible to carry out arbitrary channels or observables after a measurement of the first observable $\msf A$. There is typically a restriction on the value spaces of the subsequent measurements or restriction on the output dimension of the subsequent channels. Another typical situation is that the measurement setting in our disposal is able to realize only observables and channels that reflect some symmetries, e.g.,\ in the form of covariance with respect to unitary representations. Thus, concentrating on $\mfr D$-universal channels for the observable $\msf A$ reflects our inability to realize the whole of $\mfr C_\Ao$ or $\mfr O_\Ao$ forcing us to find the least disturbing channels which can be reached with the measurement settings in our disposal. Universality properties corresponding to restrictions on output spaces of the subsequent measurement processes and to covariance requirements thus have a clear physical meaning and provide mathematically interesting problems for future study.

\section*{Acknowledgements}

E.H. acknowledges financial support from the Japan Society for the Promotion of Science (JSPS) as an overseas postdoctoral fellow at JSPS. T.H. acknowledges financial support from the Horizon 2020 EU collaborative projects QuProCS (Grant Agreement No. 641277) and the Academy of Finland (Project no. 287750).

\newpage

\section*{Appendix}

In the proofs of Corollary \ref{lemma:fiberwise_id} and Proposition \ref{prop:extuniversal} we need a couple of technical lemmata which we state and prove in this appendix.

\subsection*{Equivalence with the identity channel}

In this subsection, we characterize quantum channels which are concatenation equivalent with the identity channel, a result needed in the proof of Theorem \ref{theor:leastdistreq}. First we need to discuss the notion of conjugate channels.

\begin{definition}
Let $\Lambda:\mc S(\hil)\to\mc S(\mc K)$ be a channel and $(\mc N,V)$ some Stinespring dilation for $\Lambda$. The channel $\Lambda^c:\mc S(\hil)\to\mc S(\mc N)$, $\Lambda^c(\varrho)=\mr{tr}_{\mc K}[V\varrho V^*]$, $\varrho\in\mc S(\hil)$, is called a {\it conjugate channel of $\Lambda$}.
\end{definition}

Although every dilation of a channel $\Lambda:\mc S(\hil)\to\mc S(\mc K)$ defines its own conjugate channel, they are all mutually post-processing equivalent. To see this, fix a minimal Stinespring dilation $(\mc N,V)$ and some other dilation $(\tilde{\mc N},\tilde{V})$ for $\Lambda$, and denote the conjugate channel defined by $(\mc N,V)$ by $\Lambda^c$ and that defined by $(\tilde{\mc N},\tilde{V})$ by $\tilde{\Lambda}^c$. Let $W:\mc N\to\tilde{\mc N}$ be an isometry such that $(\id_{\mc K}\otimes W)V=\tilde{V}$. Defining the channels $\Phi:\mc S(\mc N)\to\mc S(\tilde{\mc N})$ and $\iota:\mc S(\tilde{\mc N})\to\mc S(\mc N)$,
\begin{equation}
\begin{array}{rcll}
\Phi(\varrho)&=&W\varrho W^*,&\varrho\in\mc S(\mc N)\\
\iota(\tilde{\varrho})&=&W^*\tilde{\varrho}W+\tr{(\id_{\tilde{\mc N}}-WW^*)\tilde\varrho}\sigma,&\tilde\varrho\in\mc S(\tilde{\mc N})
\end{array}
\end{equation}
with some $\sigma\in\mc S(\mc N)$, one finds that $\Phi\circ\Lambda^c=\tilde{\Lambda}^c$ and $\iota\circ\tilde{\Lambda}^c=\Lambda^c$.

Let us fix a separable Hilbert space $\hil$. Let us denote by ${\bf SWAP}$ the set of channels $\Gamma:\mc S(\hil)\to\mc S(\mc N)$ (with varying separable Hilbert spaces $\mc N$) such that $\Gamma(\varrho)=\tr{\varrho}\sigma$ with some positive trace-1 operator $\sigma$ on $\mc N$. Denote also the set of observables $\msf B:\Omega_{\msf B}\to\mc L(\hil)$, $\Omega_{\msf B}\subset\N$, such that $\msf B(k)=p(k)\id_\hil$ with some probability distribution $p:\Omega_{\msf B}\to[0,1]$, by ${\bf TRIV}$. It is simple to see that ${\bf SWAP}$ and ${\bf TRIV}$ are both single post-processing equivalence classes.

\begin{lemma}\label{lemma:eqid}
For a channel $\Lambda:\mc S(\hil)\to\mc L(\mc K)$, the following are equivalent:
\begin{itemize}
\item[(i)] $\Lambda \simeq \mr{id}_{\mc S(\hil)}$
\item[(ii)] If $\Lambda \psleq \Lambda_{\Ao}$ for some observable $\Ao$ with $\Lambda_\Ao$ defined as in \eqref{eq:leastdistr0} with some N\u{\i}mark dilation of $\msf A$, then $\Ao\in{\bf TRIV}$. 
\item[(iii)] $\Lambda^c\in{\bf SWAP}$ for some conjugate channel $\Lambda^c$ of $\Lambda$
\item[(iv)] $\Lambda(\varrho) =\sum_n p_n V_n\varrho V_n^*$ for a probability distribution $\{p_n\}_n$ and some isometries $V_n:\hil\to\mc K$ satisfying 
$V_n^* V_m =\delta_{nm}\id$. (That is, $V_n \hil \perp V_m \hil$.)
\end{itemize}
\end{lemma}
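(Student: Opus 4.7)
The plan is a cyclic proof (i) $\Rightarrow$ (ii) $\Rightarrow$ (iii) $\Rightarrow$ (iv) $\Rightarrow$ (i). For (i) $\Rightarrow$ (ii), suppose $\Lambda \simeq \mr{id}_{\mc S(\hil)}$ and $\Lambda \psleq \Lambda_\Ao$. Then $\mr{id}_{\mc S(\hil)} \psleq \Lambda \psleq \Lambda_\Ao$ by transitivity of post-processing; since $\Lambda_\Ao$ is compatible with $\Ao$ and the post-processing of a compatible channel remains compatible with the same observable, $\mr{id}_{\mc S(\hil)}$ is compatible with $\Ao$. The no-information-without-disturbance theorem then forces $\Ao \in {\bf TRIV}$.

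For (ii) $\Rightarrow$ (iii), the main step, fix any Stinespring dilation $(\mc N, V)$ of $\Lambda$, choose an orthonormal basis $\{b_n\}$ of $\mc N$, and write $V\fii = \sum_n V_n\fii \otimes b_n$ with $V_n : \hil \to \mc K$. For every unit vector $b \in \mc N$, extend the projections $\{\kb{b}{b}, \id_\mc N - \kb{b}{b}\}$ to any sharp observable $\msf B$ on $\mc N$; the observable $\Ao(j) := V^*(\id_\mc K \otimes \msf B(j))V$ is then compatible with $\Lambda$ via the instrument $\mc I(j,\varrho) := \ptr{\mc N}{(\id_\mc K \otimes \sqrt{\msf B(j)})V\varrho V^*(\id_\mc K \otimes \sqrt{\msf B(j)})}$, so by $\mfr C_\Ao$-universality of $\Lambda_\Ao$ we have $\Lambda \psleq \Lambda_\Ao$. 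Hypothesis (ii) now forces $\Ao \in {\bf TRIV}$, whence $V^*(\id_\mc K \otimes \kb{b}{b})V$ is a scalar multiple of $\id_\hil$ for every unit $b \in \mc N$. Applying this with $b = \alpha b_i + \beta b_j$ and varying $(\alpha,\beta)$ (a standard polarization argument) yields $V_i^*V_j = c_{ij}\id_\hil$ for all $i,j$ and some scalars $c_{ij}$. A direct matrix-element computation then shows $\Lambda^c(\varrho) = \tr{\varrho}\sigma$ with $\sigma := \sum_{i,j}c_{ji}\kb{b_i}{b_j} \in \mc S(\mc N)$, i.e., $\Lambda^c \in {\bf SWAP}$.

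For (iii) $\Rightarrow$ (iv), since all conjugates of $\Lambda$ are mutually post-processing equivalent and ${\bf SWAP}$ is itself a single post-processing equivalence class, we may pick a Stinespring dilation $(\mc N, V)$ with $\Lambda^c(\varrho) = \tr{\varrho}\sigma$; diagonalising $\sigma = \sum_n p_n \kb{n}{n}$ (with $p_n > 0$, discarding the kernel) and writing $V\fii = \sum_n \sqrt{p_n}V_n \fii \otimes \ket{n}$, the equality $\Lambda^c(\varrho) = \tr{\varrho}\sigma$ becomes $\sqrt{p_mp_n}V_m^*V_n = p_n\delta_{nm}\id_\hil$, i.e., $V_n^*V_m = \delta_{nm}\id_\hil$, and tracing out $\mc N$ gives $\Lambda(\varrho) = \sum_n p_n V_n\varrho V_n^*$. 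Finally, for (iv) $\Rightarrow$ (i), the relation $V_n^*V_m = \delta_{nm}\id_\hil$ makes the $V_nV_n^*$ pairwise orthogonal projections; fixing any $\varrho_0 \in \mc S(\hil)$, the map $\Gamma(\sigma) := \sum_n V_n^*\sigma V_n + \tr{(\id_\mc K - \sum_n V_n V_n^*)\sigma}\varrho_0$ is a channel with $\Gamma \circ \Lambda = \mr{id}_{\mc S(\hil)}$, so $\mr{id}_{\mc S(\hil)} \psleq \Lambda$, while $\Lambda \psleq \mr{id}_{\mc S(\hil)}$ is trivial. The main obstacle is (ii) $\Rightarrow$ (iii): one must translate the abstract post-processing hypothesis into explicit algebraic constraints on the dilation isometries, which requires the right family of test observables on $\mc N$ together with polarization to recover the off-diagonal relations $V_i^*V_j = c_{ij}\id_\hil$.
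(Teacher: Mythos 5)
Your proof is correct and follows essentially the same cyclic route as the paper's: test observables pulled back from the Stinespring environment force the conjugate channel into ${\bf SWAP}$, the spectral decomposition of $\sigma$ yields the orthogonal-isometry form, and the explicit left-inverse channel closes the cycle. The only cosmetic difference is in (ii)$\Rightarrow$(iii), where the paper applies hypothesis (ii) directly to the binary observable built from an arbitrary effect $C$ on $\mc N$, namely $V^*(\id_{\mc K}\otimes C)V$, which renders your rank-one-plus-polarization step unnecessary.
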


\begin{proof}
We first prove (i)$\Rightarrow$(ii). Assume that $\Lambda$ is compatible with an observable $\Ao$. If $\mr{id}_{\mc S(\hil)}\simeq\Lambda$, then also $\mr{id}_{\mc S(\hil)}$ is compatible with $\Ao$, implying that $\msf A\in{\bf TRIV}$.

Assume now (ii). Fix a minimal Stinespring dilation $(\mc N,V)$ for $\Lambda$. For any $C\in\mc L(\mc N)$, $0\leq C\leq\id_{\mc N}$, the binary observable $\msf A:\{0,1\}\to\mc L(\hil)$, $\msf A(0)=V^*(\id_{\mc N}\otimes C)V$, is compatible with $\Lambda$ implying that there is $p_C\in[0,1]$ such that $p_C\id_\hil=\msf A(0)=V^*(\id_{\mc K}\otimes C)V$. This means that the conjugate channel $\Lambda^c$ defined by the dilation $(\mc N,V)$ is in ${\bf SWAP}$.

Let us next show (iii)$\Rightarrow$(iv). Let $(\mc N,V)$ be a Stinespring dilation for $\Lambda$, $\Lambda^c$ the associated conjugate channel, and $\sigma$ a positive trace-1 operator on $\mc N$ such that $\Lambda^c(\varrho)=\tr{\varrho}\sigma$. Let $\mc N_0:=\mr{ran}\,\sigma^{1/2}$ and numbers $p_n\in(0,1]$ and unit vectors $b_n\in\mc N_0$, $n=1,\ldots,\,\mr{rank}\,\sigma$, constitute a spectral decomposition
\begin{equation}
\sigma=\sum_{n=1}^{\mr{rank}\,\sigma}p_n|b_n\>\<b_n| \, .
\end{equation}
If $\sigma$ is not of full rank, complete the set $\{b_n\}_{n=1}^{\mr{rank}\,\sigma}$ into an orthonormal basis $\{b_n\}_{n=1}^{\dim{\mc N}}$ of $\mc N$ and set $p_n=0$ whenever $n>\mr{rank}\,\sigma$. Define the vector $\psi=\sum_n\sqrt{p_n}b_n\otimes b_n$. We may now define the minimal Stinespring dilation $(\hil\otimes\mc N_0,V_0)$ for $\Lambda^c$, where $V_0\fii=\fii\otimes\psi$, $\fii\in\hil$. Note that, for notational reasons, we define the dilation in the form $\Lambda^c(\varrho)=\mr{tr}_{\mc K}[V_0\varrho V_0^*]$, $\varrho\in\mc S(\hil)$. There is now an isometry $W:\hil\otimes\mc N_0\to\mc K$ such that $(W\otimes\id_{\mc N})V_0=V$. Define the operators $R_n:\mc K\otimes\mc N\to\mc K$, $R_n(\eta\otimes\xi)=\<b_n|\xi\>\eta$, $\eta\in\mc K$, $\xi\in\mc N$. Note that $R_m^*R_n=\id_{\mc K}\otimes|b_m\>\<b_n|$. For $n\leq\mr{rank}\,\sigma$, define $V_n=p_n^{-1/2}R_nV$. It follows that
\begin{eqnarray*}
\<V_m\fii|V_n\fii\>&=&\frac{1}{\sqrt{p_mp_n}}\<V\fii|(\id_{\mc K}\otimes|b_m\>\<b_n|)V\fii\>\\
&=&\frac{1}{\sqrt{p_mp_n}}\<b_n|\Lambda^c(|\fii\>\<\fii|)b_m\>=\delta_{m,n}\|\fii\|^2,
\end{eqnarray*}
implying that $V_m^*V_n=\delta_{m,n}\id_\hil$. Moreover, $V\fii=\sum_n\sqrt{p_n}V_n\fii\otimes b_n$ for every $\fii\in\hil$ proving (iv).

Assume now (iv). Define the channel $\Gamma:\mc S(\mc K)\to\mc L(\hil)$,
\begin{equation}
\Gamma(\tau)=\sum_nV_n^*\tau V_n+\tr{\Big(\id_{\mc K}-\sum_nV_nV_n^*\Big)\tau}\varrho_0,\quad\tau\in\mc S(\mc K) \, , 
\end{equation}
where $\varrho_0$ is a positive trace-1 operator on $\hil$. Recall that the projections $V_nV_n^*$ are mutually orthogonal. Straight calculation shows that $\Gamma\circ\Lambda=\mr{id}_{\mc S(\hil)}$.
\end{proof}

\subsection*{Support projection}\label{subsec:supp}

In the proof of Proposition \ref{prop:extuniversal}, the notion of a support projection of a quantum channel is needed. This support projection is defined in the lemma below and some of its important properties are stated.

\begin{lemma}\label{lemma:suppPhi-Sapfo}
Let $\hil$ and $\mc K$ be Hilbert spaces and $\Lambda:\mc S(\hil)\to\mc S(\mc K)$ be a channel. There is a unique projection $R$ on $\mc K$ such that $\Lambda^*(R)=\id_\hil$ and, whenever $Q\leq R$ is a projection, $\Lambda^*(Q)=\id_\hil$ implies $Q=R$. Moreover, when the projection $R$ is as above,
\begin{itemize}
\item[(i)] $\Lambda^*(B)=\Lambda^*(RB)=\Lambda^*(BR)=\Lambda^*(RBR)$ for all $B\in\mc L(\mc K)$,
\item[(ii)] for a positive $E\in\mc L(\mc K)$, $\Lambda^*(E)=0$ implies $RER=0$, and
\item[(iii)] whenever $E\in\mc L(\mc K)$ is positive and $\Lambda^*(E)$ is a projection, then $RER$ is a projection as well and $RE=ER$.
\end{itemize}
\end{lemma}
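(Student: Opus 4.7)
My plan is to first construct $R$ explicitly as the orthogonal projection onto the closed linear span $R\mc K := \overline{\mathrm{span}}\bigcup_{\varrho\in\mc S(\hil)}\mathrm{ran}\,\Lambda(\varrho)$ in $\mc K$. The equation $\Lambda^*(R) = \id_\hil$ follows immediately from $R\Lambda(\varrho) = \Lambda(\varrho)$ by pairing with arbitrary states. For minimality, if $Q \le R$ is a projection with $\Lambda^*(Q) = \id_\hil$, then the positive operator $R - Q$ satisfies $\tr{\Lambda(\varrho)(R-Q)}=0$ for every state $\varrho$; a standard positivity argument (write $\tr{(R-Q)^{1/2}\Lambda(\varrho)(R-Q)^{1/2}}=0$ and use that a trace-zero positive operator vanishes) then yields $(R-Q)\Lambda(\varrho) = 0$, and by the definition of $R$ this forces $R = Q$. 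Uniqueness of $R$ is obtained from closedness of $\mc S:=\{P \text{ projection}: \Lambda^*(P) = \id_\hil\}$ under meets: the De Morgan identity $\id - P_1 \wedge P_2 = (\id-P_1) \vee (\id-P_2) \le (\id-P_1)+(\id-P_2)$ gives $\Lambda^*(P_1 \wedge P_2) = \id_\hil$, so any two minimal elements must equal their meet.

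For (i), I would fix a Stinespring dilation $(\mc N,V)$ of $\Lambda$, so $\Lambda^*(B) = V^*(B\otimes\id_\mc N)V$. Since $V^*((\id-R)\otimes\id_\mc N)V = \Lambda^*(\id - R) = 0$ and the operator inside is positive, one obtains $((\id-R)^{1/2}\otimes\id_\mc N)V = 0$, hence $V = (R\otimes\id_\mc N)V$. Substituting this into the Stinespring expression immediately collapses $\Lambda^*(B), \Lambda^*(RB), \Lambda^*(BR)$ and $\Lambda^*(RBR)$ to a common value. Property (ii) follows from the same trace argument: $\Lambda^*(E)=0$ with $E \ge 0$ gives $\tr{\Lambda(\varrho)E} = 0$, whence $E^{1/2}\Lambda(\varrho) = 0$ for every $\varrho$, so every $\mathrm{ran}\,\Lambda(\varrho)$ lies in $\ker E$; consequently $R\mc K \subseteq \ker E$ and $ER = RER = 0$.

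For (iii), which implicitly needs $0 \le E \le \id$ (otherwise one can check that a one-dimensional input channel together with a positive $E$ having $\|E\|>1$ already violates $RE = ER$), I would push $E$ into the multiplicative domain $\mc A$ of $\Lambda^*$. Writing $P = \Lambda^*(E)$, Kadison--Schwarz gives $\Lambda^*(E^2) \ge \Lambda^*(E)^2 = P$ while $E^2 \le E$ gives $\Lambda^*(E^2) \le P$, so $\Lambda^*(E^2) = \Lambda^*(E)^2$ and consequently $E \in \mc A$. Since $\id - R$ is a projection with $\Lambda^*(\id - R) = 0$, it also lies in $\mc A$, and multiplicativity yields $\Lambda^*(E(\id-R)E) = P \cdot 0 \cdot P = 0$. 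Applying (ii) to the positive operator $E(\id - R)E = ((\id - R)ER)^*((\id - R)ER)$ forces $(\id - R)ER = 0$, i.e., $ER = RER$; taking adjoints gives $RE = RER$, so $RE = ER$. Finally, $(RER)^2 = RE \cdot ER = RE^2R$, while applying (ii) to $E - E^2 \ge 0$ (whose $\Lambda^*$-image vanishes) yields $RE^2R = RER$, so $RER$ is a projection.

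The main obstacle is part (iii): verifying $E \in \mc A$ is the crucial non-formal step, and it is precisely where the implicit effect assumption $E \le \id$ is consumed through the double bound $\Lambda^*(E)^2 \le \Lambda^*(E^2) \le \Lambda^*(E)$. Everything else in the lemma reduces to the single clean observation $V = (R\otimes\id_\mc N)V$ plus the elementary trace-of-positive argument.
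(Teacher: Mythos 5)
Your treatment of the existence of $R$, of items (i) and (ii), and of item (iii) is correct and in fact more complete than the paper's, which cites a textbook for the first part and for (i)--(ii) and proves only (iii). For (iii) you and the paper both exploit the equality case of the Schwarz inequality, but through different mechanisms: the paper shows directly that $\Lambda^*(E-ERE)=0$ and applies (ii) to get $RER=RERER$, then merely asserts that $RER$ being a projection forces $RE=ER$; you instead place $E$ and $\id_{\mc K}-R$ in the multiplicative domain of $\Lambda^*$ and derive $(\id_{\mc K}-R)ER=0$, which gives the commutation cleanly. You also correctly identify that (iii) needs the implicit hypothesis $E\le\id_{\mc K}$ (the paper's own proof silently switches from ``positive $E$'' to ``effect $E$'', and uses $E\geq ERE$, which requires it). One cosmetic slip: $E(\id_{\mc K}-R)E$ is not equal to $\big((\id_{\mc K}-R)ER\big)^*\big((\id_{\mc K}-R)ER\big)$; rather, applying (ii) to the positive operator $E(\id_{\mc K}-R)E$ gives $R\,E(\id_{\mc K}-R)E\,R=\big((\id_{\mc K}-R)ER\big)^*\big((\id_{\mc K}-R)ER\big)=0$, which is what you actually use.

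There is, however, one step that fails as written: the operator inequality $P_1\vee P_2\le P_1+P_2$ for projections is false in general. Take $P_1=\kb{e_1}{e_1}$ and $P_2=\kb{v}{v}$ with $v=(e_1+e_2)/\sqrt2$ in $\C^2$: then $P_1\vee P_2=\id$, while $P_1+P_2$ has the eigenvalue $1-1/\sqrt2<1$. So your De Morgan argument for closure under meets of the set of projections $P$ with $\Lambda^*(P)=\id_\hil$, and hence your uniqueness proof, does not go through as stated. The gap is easily repaired from what you already have: if $P$ is any projection with $\Lambda^*(P)=\id_\hil$, then $\Lambda^*(\id_{\mc K}-P)=0$, and your own trace-of-a-positive-operator argument gives $(\id_{\mc K}-P)\Lambda(\varrho)=0$ for every state $\varrho$, hence $\mathrm{ran}\,\Lambda(\varrho)\subseteq P\mc K$ for all $\varrho$ and therefore $R\le P$ by your definition of $R$. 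Thus $R$ is the \emph{least} projection mapped to $\id_\hil$, which yields the minimality property, uniqueness, and (as a byproduct) the closure under meets all at once.
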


\begin{proof}
The first part of the Lemma is well-known, and proofs for the claims in items (i) and (ii) can be found, e.g.,\ in \cite[Section 10.8]{QM16}. Let now $E$ be an effect such that $\Lambda^*(E)$ is a projection. Using the Schwarz inequality, one finds
\begin{equation}
\Lambda^*(E)=\Lambda^*(RER)=\Lambda^*(RER)^2\leq\Lambda^*(RERER)=\Lambda^*(ERE)
\end{equation}
implying $\Lambda^*(E-ERE)\leq0$. Since $E\geq ERE$ and $\Lambda^*$ is positive, we have $\Lambda^*(E-ERE)=0$, and thus, using the above result, one obtains $RER=RERER$, i.e.,\ $RER$ is a projection. Note that if $\Lambda^*(E)\neq0$, also $RER$ is non-zero since $\Lambda^*(E)=\Lambda^*(RER)$. It is simple to show that $PAP$ is a projection for any projection $P$ and a positive operator $A$ only if $P$ and $A$ commute. Therefore, $E$ and $R$ commute.
\end{proof}


\newpage

\providecommand{\bysame}{\leavevmode\hbox to3em{\hrulefill}\thinspace}
\providecommand{\MR}{\relax\ifhmode\unskip\space\fi MR }
\providecommand{\MRhref}[2]{%
  \href{http://www.ams.org/mathscinet-getitem?mr=#1}{#2}
}
\providecommand{\href}[2]{#2}

\end{document}